\newtheorem{thm}{Theorem}[section]
\newtheorem{lem}[thm]{Lemma}
\newtheorem{prop}[thm]{Proposition}
\newtheorem{cor}[thm]{Corollary}
\newtheorem{defi}[thm]{Definition}
\newtheorem*{remark}{Remark}
\newtheorem*{remarks}{Remarks}
\newcommand{\A}{\mathcal{A}}
\newcommand{\C}{\mathcal{C}}
\newcommand{\D}{\mathcal{D}}
\newcommand{\eps}{\varepsilon}
\newcommand{\iStar}{{\phantom{*}}}
\newcommand{\iPower}{{\phantom{2}}}
\renewcommand{\P}[1]{P_{{#1},{#1+1}}^*}
\newcommand\sumcos[1]{\sum_{j=1}^{#1}\prod_{\substack{k=1 \\ k\neq j}}^{#1}c_k^\iPower}
\begin{document}

 \title{Multistability of Phase-Locking and Topological Winding Numbers 
 in Locally Coupled Kuramoto Models on Single-Loop Networks}.
 \date{14 December 2015, revised: 15 February 2016, published: 9 March 2016}
 \author{Robin Delabays}
 \author{Tommaso Coletta}
 \author{Philippe Jacquod}
 \affiliation{School of Engineering, University of Applied Sciences of Western Switzerland, CH-1950 Sion, Switzerland}

 \keywords{Kuramoto model, multistability, winding numbers, loop flows}
 \pacs{05,45.-a, 05.45.Xt, 84.70.+p}
  
 \begin{abstract}
 \textit{
 Determining the number of stable phase-locked solutions for locally coupled Kuramoto models
 is a long-standing mathematical problem with important implications in biology, condensed matter  
 physics and electrical engineering among others. We investigate
 Kuramoto models on networks with various topologies and show that different phase-locked 
 solutions are related to one another by loop currents. The latter take only discrete values, 
 as they are characterized by topological winding numbers. This result is generically valid
 for any network, and also applies beyond the Kuramoto model, 
 as long as the coupling between oscillators is antisymmetric in the oscillators'
 coordinates. Motivated by these results we further investigate loop currents in Kuramoto-like models. 
 We consider loop currents in  nonoriented $n$-node
 cycle networks with nearest-neighbor coupling. 
 Amplifying on earlier works, we give an algebraic upper bound $\mathcal{N} \le 
 2 \, {\rm Int}[n/4]+1$ for the number $\cal N$ of different, linearly stable 
 phase-locked solutions. We show that the number of different 
 stable solutions monotonically decreases as the 
 coupling strength is decreased. 
 Furthermore stable solutions with a single angle difference exceeding $\pi/2$ emerge 
 as the coupling constant $K$ is reduced, 
 as smooth continuations of solutions with all angle differences smaller than $\pi/2$ at higher $K$.
 In a cycle network with nearest-neighbor coupling 
 we further show that phase-locked solutions with two or more angle differences larger
 than $\pi/2$ are all linearly unstable. We point out similarities between loop currents 
 and vortices in superfluids and superconductors as well as 
 persistent currents in superconducting rings and two-dimensional Josephson junction arrays. 
 }
 \end{abstract}

 \maketitle

\section{Introduction}
From large colonies of fireflies flashing in unison to single-frequency electric power grids
covering areas as large as entire continents, from human brain waves to arrays of submicronic 
Josephson junctions, there are many, disparate systems that exhibit collective 
synchrony~\cite{Str03}. Following early works, most notably by Winfree~\cite{Win67},
a window towards a quantitative, mathematical understanding of collective synchrony
was opened by Kuramoto~\cite{Kur84} who 
proposed a model of coupled oscillators defined by the following set of 
nonlinear differential equations
\begin{eqnarray}
\label{eq:Kuramoto}
\dot{\theta}_i = P_i - \sum_{j=1}^n K_{ij} \sin(\theta_i - \theta_j) \, , \,\,\,\,\,\,\, i=1,\ldots n\, .
\end{eqnarray}
The model describes the dynamics of a set of $n$ one-dimensional 
oscillators with angular coordinates $\theta_i$ and natural  
frequencies $P_i$ under the influence of a 
coupling that is periodic in their angle differences. 
The Kuramoto model has 
become a standard model for investigating the transition to
synchrony  in coupled dynamical systems~\cite{Str00,Ace05,Dor14}. 

The beauty of the Kuramoto model is that it is sufficiently simple to allow for analytical treatments
of the emergence of synchrony in coupled oscillators systems, while retaining most of the essence of 
this complex problem. As a matter of fact, 
Kuramoto observed early on that for constant all-to-all coupling, $K_{ij} \equiv K/n$, 
an analytically solvable mean-field solution becomes exact in the large $n$ limit. 
A coherent, synchronous state $\{ \theta_i^{(0)} \}$
emerges for $K > K_c$
in the form of a solution to Eq.~(\ref{eq:Kuramoto}) with $\dot\theta_i^{(0)}-\dot\theta_j^{(0)}=0$, 
for at least a finite fraction of pairs of oscillators $(i,j)$. The critical coupling strength $K_c$
depends on the
distribution $g(P)$ of natural frequencies $P_i$, and 
phase-locking with $\dot\theta_i^{(0)}-\dot\theta_j^{(0)}=0$, for all $i,j$, can be achieved
if $g(P)$ has compact support~\cite{Erm85,Hem93}. 

Most physical systems exhibiting synchrony consist however in collections of 
subsystems with short-range coupling. The problem becomes much more complicated 
for Kuramoto models defined on such complex networks with reduced node degree
because the mean-field approach no longer applies. Phase-locked solutions $\{ \theta_i^{(0)} \}$ to 
Eq.~(\ref{eq:Kuramoto}) are determined by
\begin{eqnarray}\label{eq:Ksteady}
P_i = \sum_{j=1}^n K_{ij} \sin(\theta_i^{(0)} - \theta_j^{(0)}) \, , 
\end{eqnarray}
i.e. by a set of $n$ nonlinear algebraic equations which, in principle, accept more than one solution. 
For instance new solutions to Eq.~(\ref{eq:Ksteady}) can be obtained from known
solutions by substituting $\theta_i^{(0)}  - \theta_j^{(0)}  \rightarrow \pi-(\theta_i^{(0)}  - \theta_j^{(0)} )$
for some or all $(i,j)$.
This can lead, in principle, to an exponential number $\propto 2^{n}$ of solutions, however, many of them
are not dynamically stable in the sense given by Eq.~\eqref{eq:Kuramoto}. 
It has in particular been shown that there is a single stable solution above 
the transition to synchrony
for all-to-all couplings~\cite{Aey04,Mir05} and for
identical oscillators ($P_i=\bar P$, for all $i$) on networks with sufficiently 
large node degree~\cite{Tay12}.

To be physically relevant, a solution 
of Eq.~(\ref{eq:Ksteady}) needs to be robust against any small perturbation.
Thus, the truly important question is "how many {\it linearly stable} 
phase-locked solutions to the Kuramoto problem are there ?" 
This question dates back at least to the work of 
Korsak~\cite{Kor72} in the context of the power flow 
problem (dealing in particular with conditions for operational 
synchrony in the electric power grid)~\cite{Ber00}, which is 
closely related to the Kuramoto model. 
As a matter of fact, it turns out (see below in Section~\ref{ssec:powerflow})
that for high voltages, a first 
approximation is to neglect ohmic losses, in which case AC electric 
power transport between the nodes of a power grid is 
governed by Eq.~(\ref{eq:Ksteady}), with $P_i$ being the power injected ($P_i>0$)
or extracted ($P_i<0$) at node $i$. Korsak provided a simple example of a network where 
different, linearly stable solutions exist that differ by some circulating loop current.
Similar works have dealt with that problem since then, motivated by issues of voltage and phase
stability that are central to the stable, synchronous operation of electric power grids. Most, if not all of 
these investigations are however
restricted to numerical investigations on small networks. The literature on the subject is rather large and
we refer the interested reader to Refs.~\onlinecite{Tam83,Klos91} for more information.
Bounds for the number of different stable solutions were first constructed in the spirit of the argument
given after Eq.~(\ref{eq:Ksteady}). In this way,
Refs.~\onlinecite{Bai82} and \onlinecite{Ngu14} gave an exponential upper bound for the number of power flow solutions,
as did Ref.~\onlinecite{Meh14} for the phase-locked solutions of the Kuramoto model. 
Ref.~\onlinecite{Meh14} observed numerically, however, that the number of stable solutions is much smaller than $2^n$.
Below we show that a much better, algebraic upper bound is obtained when considering 
that quantized loop currents differentiate between different stable solutions. 

To the best of our knowledge, 
the characterization of loop flows with topological winding numbers has been first made by 
Janssens and Kamagate~\cite{Jan03}, though L\"uders  (in a referee discussion
at the end of Ref.~\onlinecite{Kor72}) and Ermentrout~\cite{Erm85}
already point to it. Topological winding numbers
emerge from the consistency requirement 
that summing angle differences along any cycle in a network must give an integer
multiple of $2 \pi$.
Below we illustrate how this leads to loop currents that can take only discrete values. 
Similar considerations in a different physical context lead to the quantization of circulation around 
vortices in a superfluid~\cite{Ons49,Fey55} or a type-II superconductor~\cite{Abr57}
and to the quantization of persistent currents in superconducting rings~\cite{Bye61,Mat02} and
rings of Josephson junctions~\cite{Ras13}. That such a similarity exists is not a surprise,
given that each term on the right-hand-side of Eq.~(\ref{eq:Ksteady}) gives the Josephson
current between two superconductors with order parameter 
$\psi_{i,j}=n_s^{1/2} \exp i \theta_{i,j}^{(0)}$
coupled by a tunnel junction of transparency $T_{ij} = \hbar K_{ij}/8 e n_s$. 
More surprising, however, is that
investigations on small cycle networks with injections and consumptions show that
loop currents persist even in networks with ohmic dissipation~\cite{Kor72,Buk13}.

Using winding numbers, Rogge and Aeyels~\cite{Rog04}
obtained an algebraic upper bound $\mathcal{N} \le 2\, {\rm Int}[n/4]+1$ 
for the number of stable solutions with any angle difference in a Kuramoto model on a 
$n$-node ring with unidirectional nearest-neighbor couplings.
The same upper bound has been calculated 
by Ochab and G\'ora~\cite{Och10} in a nonoriented $n$-node Kuramoto ring with 
nearest-neighbor coupling, under the condition that 
all angle differences are smaller than $\pi/2$. This upper bound is reached when the 
coupling strength goes to infinity, equivalently corresponding to $P_i=0$, for all $i$ in 
Eqs.~(\ref{eq:Kuramoto}) and (\ref{eq:Ksteady}), i.e. to identical oscillators. This alternatively gives 
the number of stable states for Josephson junction rings in the classical
regime, neglecting Coulomb interaction effects~\cite{Mat02,Ras13}. Different solutions for the 
Kuramoto model
where investigated semi-analytically in Ref.~\onlinecite{Til11} and classified according
to two integers, $q$ (the winding number mentioned above) and
$l$ (the number of angle differences larger than $\pi/2$) in Ref.~\onlinecite{Roy12}. 
In a somewhat different but related direction of investigation, 
Wiley et al.~\cite{Wil06} investigated the size of the basin of attraction for 
synchronous solutions with different $q$ in a cycle network of identical oscillators and found
that it gets smaller at higher $q$.

Below we show that two different solutions to Eq.~(\ref{eq:Ksteady})
on any network differ only by loop currents. This provides additional motivation for
investigating loop currents as it rigorously connects them to multiple stable solutions to 
Eq.~\eqref{eq:Ksteady}. We thus investigate single-loop networks and show that 
the algebraic upper bound of Rogge and Aeyels~\cite{Rog04} and Ochab and G\'ora~\cite{Och10} 
is generically valid for the Kuramoto model on a nonoriented cycle with nearest-neighbor interactions.
We furthermore demonstrate that, for such networks,  at most one angle difference can exceed $\pi/2$.
Stable solutions are in particular restricted to only $l=0$ or $l=1$ in the classification scheme of Ref.~\onlinecite{Roy12}. 
We show that the number of stable solutions decreases
monotonically as the coupling strength is reduced, and that 
solutions with $l=1$ emerge continuously at lower coupling from solutions
with $l=0$. 

\paragraph*{}
The manuscript is organized as follows: Section~\ref{preliminaries} states the initial concepts and defines the model considered. 
Loop flows and their link with multiple solutions to the power flow equations are discussed in Section~\ref{loopWinding}.
Section~\ref{multipleSolutions} gives a complete study of the multiple stable solutions to the power flow equations on a cycle network.
Conclusions are given in Section~\ref{conclusion}.

\section{Definitions and fundamental concepts}\label{preliminaries}
We are interested in a class of problems represented by at least three important physical systems. 
We have already defined the Kuramoto model in Eq.~(\ref{eq:Kuramoto}), for which more details can be found in 
review articles~\cite{Str00,Ace05,Dor14}. We have briefly mentioned vortices in superfluids and
superconductors, as well as Josephson junction arrays, where circulating supercurrents are given by laws similar to 
Eq.~\eqref{eq:Ksteady} and for which a rather vast literature, including review articles, also 
exists~\cite{And66,Bla94,Faz01}.
These problems are well documented in the physics literature and we
therefore do not discuss them further.
Electric power grids are less known in physics and we start with a brief introduction to this third
class of problems, 
emphasizing its connection with the Kuramoto model of Eq.~(\ref{eq:Kuramoto}).

\subsection{Power Flow and Swing Equations}\label{ssec:powerflow}

Power grids are AC electric networks. They can be modeled as graphs with  
$n$ nodes where each node $i=1,...,n$ injects (consumes) a power 
$P_i>0$ ($P_i<0$). The edges of the graph represent 
electrical lines with a complex admittance $Y=G+i B$. 
Power grids span different voltage levels separated by transformers which, to a good
approximation, conserve power but neither current nor voltage. Additionally, the control variables 
are the injected and consumed powers, therefore the equations governing the behavior
of the system are expressed in terms of electric powers and not currents. Considering a generating
power plant, the balance between the source (mechanical, thermal, chemical or nuclear) power,
the transmitted (electric) power and the losses leads to 
the {\it swing equations}~\cite{Ber00}
\begin{equation}\label{Swing}
\dot{\theta_i}=P_i-\sum_{j=1}^n|V_i| \  |V_j| \, \left[ G_{ij} \cos(\theta_i - \theta_j)
+ B_{ij} \sin(\theta_i - \theta_j) \right ] \, , \qquad   i=1,\ldots n \, ,
\end{equation}
where $\theta_i$ is the angle between the currents $I_i = |I_i| \exp(i \omega t)$
and the voltages $V_i = |V_i| \exp(i \omega t+i \theta_i)$ (in a frame rotating
with the frequency $\omega/2 \pi = 50$ or 60 Hz of the grid), 
$\theta_i-\theta_j$ is taken in 
$\big(-\pi,\pi\big]$ and $G$ and $B$ are the conductance and susceptance matrices 
respectively~\cite{Ber00}. 
In Eq.~(\ref{Swing}), we already consider a simplified version of the 
swing equations, where we neglected the inertia of the (rotating) generators. We did that since our main
interest is to determine whether a solution is stable or not,
which is not influenced by the presence of an inertia term (note that the inertia influences stability
time scales~\cite{Ber00}). 
In most of our discussion we make a second approximation and
consider networks of purely susceptive lines with $G_{ij}=0$. This is a leading order approximation 
in the small parameter $G_{ij}/B_{ij} < 0.1$ valid for very
high voltage networks. With this approximation,
lines have no ohmic losses and all nodes are at the same voltage. For the sake of
simplicity, we will also consider lines with identical capacities and set 
$K\coloneqq |V_i||V_j|B_{ij}$, on all edges $\langle ij\rangle$. With all these approximations, Eq.~(\ref{Swing}) 
leads to an equation similar to Eq.~(\ref{eq:Kuramoto}), 
\begin{equation}\label{SwingSimple}
\dot{\theta_i}=P_i-K \sum_{j\sim i} \sin(\theta_i-\theta_j) \, , \qquad  i=1,\ldots n\, , 
\end{equation}
where the sum is taken over nodes $j$ connected to node $i$ ($j\sim i$) and phase-locked solutions are 
governed by the power flow equations, which reduce to the form of Eq.~(\ref{eq:Ksteady}),
\begin{equation}\label{PFsimple}
 P_i=K  \sum_{j\sim i}\sin(\theta_i-\theta_j)\, , \qquad  i=1,\ldots n\, .
\end{equation}
Electric power grids are balanced in steady-state, meaning that power injections exactly
compensate power consumptions, i.e. 
\[
 \sum_iP_i=0\, .
\]
Additionally, injected and consumed powers are confined to a compact support,
$P_i \in [P_{\rm min},P_{\rm max}]$ which is necessary for the 
existence of phase-locked synchronous solutions~\cite{Erm85,Hem93}.

We note finally that the quantity $P_{ij}\coloneqq K\sin(\theta_i-\theta_j)$ represents 
the power flow along line $\langle ij\rangle$, from site $i$ to site $j$, so that 
Eq.~(\ref{PFsimple}) can be rewritten as 
\begin{align}\label{kirchhoff}
 P_i = \sum_{j \sim i} P_{ij}\, , 
\end{align}
which is Kirchhoff's currents law. Below we often use
the power flow terminology and in particular we discuss {\it loop flows} to describe circulating flows 
around closed cycles that do not distribute power to consuming nodes.
Eq.~(\ref{PFsimple}) only depends on angle differences, thus any solution is 
defined up to an homogeneous displacement of all angles. This gauge invariance allows to 
arbitrarily define a reference node whose angle is set to zero. All other angles are then determined 
with respect to that reference angle.

\subsection{Stability}\label{stability}
The swing equations, Eq.~(\ref{SwingSimple}), 
govern the system's dynamics and allow to determine the
linear stability of solutions 
of Eq.~(\ref{PFsimple}).
Under small perturbations about such a phase-locked solution, $\theta_i^{(0)} \rightarrow \theta_i^{(0)}
+ \delta \theta_i$, the linearized dynamics reads
\begin{equation}\label{SwingLin}
\delta \dot{\theta_i}=-\sum_{j\sim i}K\cos(\theta_i^{(0)}-\theta_j^{(0)}) 
(\delta \theta_i-\delta \theta_j) \, , \qquad  i=1,\ldots n\, .
\end{equation}
The linear stability of the solution $\{\theta_i^{(0)}\}$  is therefore determined 
by the spectrum of the \emph{stability matrix} $M(\{\theta_i^{(0)}\})$, 
\begin{align}\label{stabilityM}
 M_{ij}\coloneqq\left\{
 \begin{array}{ll}
  K\cos(\theta_i^{(0)}-\theta_j^{(0)})\, ,&\text{if }i\neq j\, ,\\
  \displaystyle -\sum_{k\sim i}K\cos(\theta_i^{(0)}-\theta_k^{(0)})\, ,&\text{if }i=j\, ,
  \end{array}
  \right.
\end{align}
which depends on the angles at the phase-locked solution. 
The eigenvalues of $M(\{ \theta_i^{(0)} \})$ are called Lyapunov exponents.
Because $\sum_j M_{ji}  = \sum_j M_{ij} = 0$, for all $i$, the constant vector is
an eigenvector of $M$ with eigenvalue $\lambda_1 = 0$. This follows from the above
mentioned gauge invariance, 
where only angle differences between oscillators matter.
Furthermore, as $M$ is real symmetric, all its eigenvalues are real. 
Thus the synchronous state is stable if $M(\{ \theta_i^{(0)} \})$ 
is negative semidefinite and unstable otherwise. 
In other words, the synchronous solution remains stable as long as the 
largest nonvanishing eigenvalue $\lambda_2$ of $M(\{ \theta_i^{(0)} \})$ remains negative.

To the best of our knowledge, it was first mentioned in Ref.~\onlinecite{Tav72} that as long as 
all angle differences are in $\big[-{\pi}/{2},{\pi}/{2}\big]$, 
Gershgorin's circle theorem~\cite{Hor85} guarantees that $M$ is negative semi-definite.
Then all Lyapunov exponents are non-positive, which implies that any solution 
of Eq.~(\ref{PFsimple}) with $\theta_i^{(0)}-\theta_j^{(0)} \in \big[-\pi/2,\pi/2\big]$ on each
of the graph's edges is linearly stable. 
The same theorem allows to conclude that if 
$|\theta_i^{(0)}-\theta_j^{(0)}| > \pi/2$ on all edges, the solution is linearly unstable.  
Recent works have investigated solutions with a single angle difference larger than $\pi/2$
in a Kuramoto model on a cycle network~\cite{Til11,Roy12}.
However, little is known analytically if some of the angle differences are smaller
and some are larger than $\pi/2$, except on cycle networks with unidirectional nearest-neighbor  
coupling~\cite{Rog04}.
Below we fill this gap and show that at most one angle difference is bigger than $\pi/2$ and that 
a stable solution with one angle difference exceeding $\pi/2$ comes from a solution at larger $K$ with all 
angle differences smaller than $\pi/2$.

\section{Loop flows and winding number}\label{loopWinding}
In this section we show that different solutions of Eq.~(\ref{PFsimple}) for any network 
differ only by circulating loop flows. This rigorous result, which appeared in slightly different form in Ref.~\onlinecite{Dor13}, sheds light on the common wisdom
that Eq.~(\ref{PFsimple}) may have multiple stable solutions for networks with closed 
cycles~\cite{Jan03,Kor72,Rog04,Wil06}. 
Before we discuss this theorem, we recall some definitions from graph theory which we will use.

\begin{defi}
A \textbf{graph} $G=(\mathcal{V}_G,E_G)$ is a set of vertices $\mathcal{V}_G$ with 
a set of edges $E_G$, each of which is a pair of connected vertices. If 
$i,j\in\mathcal{V}_G$, the edge connecting $i$ to $j$ is $\langle ij\rangle\in E_G$.
\end{defi}

\begin{defi}
 A \textbf{path} from vertex $i$ to vertex $j$ in a graph $G$ is a sequence $S\subset E_G$ of edges
 \begin{align*}
  S=\{\langle ii_1\rangle,\langle i_1i_2\rangle,...,\langle i_{\ell}j\rangle\}\, .
 \end{align*}
\end{defi}

\begin{defi}
 A graph is \textbf{connected} if for any two vertices $i,j\in\mathcal{V}_G$ there exists a path from $i$ to $j$.
\end{defi}

\begin{defi}
A \textbf{cycle} in a graph is a path from a vertex $i$ to itself going at most once through any edge. 
\end{defi}

\begin{defi}
A \textbf{tree} is a connected graph with no cycle. 
Given a graph $G=(\mathcal{V}_G,E_G)$, a \textbf{spanning tree} $T$ of $G$ is a tree such that $\mathcal{V}_T=\mathcal{V}_G$ and $E_T\subset E_G$.
\end{defi}

\begin{remarks}
 \begin{enumerate}[(i)]
  \item It can be shown inductively that a tree with $n$ vertices has exactly $n-1$ edges.
  \item On a tree-network, there is a unique flow distribution satisfying Kirchhoff's current law.
 \end{enumerate}
\end{remarks}

In what follows, we use the terms \emph{network} and \emph{grid} to denote physical objects, 
whose mathematical representations will be referred to as \emph{graphs}.
Additionally, we introduce the concept of \emph{loop flows}, which are
constant power flows circulating clockwise or anticlockwise around a cycle in a network. 
Strictly speaking, loop flows can be univocally defined only when power is neither
injected nor consumed in the network. With finite power injections and consumptions, loop flows can be defined only
relatively, as flow differences from a reference solution, in the spirit of the upcoming theorem.  

\paragraph*{}
Let $G$ be a graph and $O_G$ an arbitrary orientation of this graph, which means that we define positive and negative 
directions for every edge of $G$ in the following way. 
For each edge $\langle ij\rangle$ we call the vertex $i$ the \emph{source} of the edge and $j$ its \emph{target}.
Consider the real vector space $\mathcal{I}\simeq\mathbb{R}^m$ of flows on the $m$ edges of a
graph $G$. The components $\{I_\ell\}$
of a flow vector $\bm{I}\in\mathcal{I}$ describe the intensity of the flow on the $\ell^{\rm th}$
edge of $G$, with $I_\ell>0$ if the direction of the flow agrees with the orientation of this edge given by $O_G$, 
and $I_\ell<0$ otherwise. 
The canonical basis of $\mathcal{I}$ is the set of flow vectors 
$\bm{J}_\ell$, $\ell=1,...,m$, with unit flow on edge $\ell$ and zero flow on all other edges. 
Given a vector of power injections and consumptions at every node,
\begin{align*}
 \bm{P}&=(P_1,...,P_n)\in(1,...,1)^\perp\subset\mathbb{R}^n\, ,
\end{align*}
a flow vector $I\in\mathcal{I}$ satisfies Kirchhoff's current law
$P_i = \sum_{j \sim i} P_{ij}$ if 
\begin{align}\label{mlfs_balance2}
 P_i&=\sum_{\ell}A_{i\ell}I_\ell\qquad i=1,...,n\, ,
\end{align}
where we introduced the \emph{incidence matrix} $A$ of $G$,
\[
 A_{i\ell}\coloneqq\left\{
 \begin{array}{ll}
  1\, ,&\text{if node }i\text{ is the source of edge }\ell \, ,\\
  -1\, &\text{if node }i\text{ is the target of edge }\ell \, ,\\
  0\, ,&\text{otherwise}\, .
 \end{array}
 \right.
\]
We are now ready to formulate and prove our theorem.

\begin{thm}\label{thm_loop_flow}
 Let $G=(\mathcal{V}_G,E_G)$ be a connected graph with $\left|\mathcal{V}_G\right|=n$ sites and $\left|E_G\right|=m$ edges. Let $\bm{P}\in(1,...,1)^{\perp}$ be a vector of power 
 injections and consumptions at each node. Then two distributions of flows on $G$
 represented by flow vectors $\bm{I}'$ and $\bm{I}''\in\mathcal{I}$ satisfying Kirchhoff's currents law, Eq.~\eqref{kirchhoff},
 differ by a combination of loop flows on the different cycles of $G$.
\end{thm}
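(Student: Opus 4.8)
The plan is to reduce the statement to a property of the kernel of the incidence matrix $A$ and then to decompose that kernel using a spanning tree. First I would set $\bm{D}\coloneqq\bm{I}'-\bm{I}''$. Since $\bm{I}'$ and $\bm{I}''$ both satisfy Eq.~\eqref{mlfs_balance2} with the \emph{same} injection vector $\bm{P}$, subtracting the two identities gives $A\bm{D}=0$, i.e. $\bm{D}\in\ker A$. It therefore suffices to prove that every element of $\ker A$ is a linear combination of loop flows.

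Next I would make precise what a loop flow is as a vector in $\mathcal{I}$: to a cycle $\gamma$ of $G$ together with a chosen direction of traversal, associate the vector $\bm{C}_\gamma$ whose $\ell^{\rm th}$ component is $+1$ if edge $\ell$ is traversed along $O_G$, $-1$ if traversed against $O_G$, and $0$ if $\ell\notin\gamma$. A direct check shows $A\bm{C}_\gamma=0$: the edges of $\gamma$ incident to any given vertex come in pairs --- one entered and one left at each passage through that vertex --- whose contributions to $A\bm{C}_\gamma$ cancel. Hence each loop flow lies in $\ker A$, and so does every linear combination of loop flows; the content of the theorem is the reverse inclusion.

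To obtain that reverse inclusion I would fix a spanning tree $T$ of $G$, which exists because $G$ is connected. Its complement consists of $m-(n-1)$ chords; each chord $e=\langle ij\rangle$ closes a unique fundamental cycle $\gamma_e$ together with the tree path from $i$ to $j$, and I write $\bm{C}_e$ for the associated loop flow, oriented so that its component on $e$ equals $+1$. The key bookkeeping point is that $\gamma_e$ uses only tree edges besides $e$, so $\bm{C}_e$ has component $0$ on every chord other than $e$. Consequently, given $\bm{D}\in\ker A$ with components $d_e$ on the chords, the vector $\bm{D}'\coloneqq\bm{D}-\sum_{e\ \mathrm{chord}}d_e\,\bm{C}_e$ still lies in $\ker A$ and, by construction, vanishes on every chord; thus $\bm{D}'$ is a flow supported on $T$ satisfying Kirchhoff's current law with zero injections. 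By the uniqueness of the flow distribution on a tree (Remark~(ii) above), and since the zero flow is such a distribution, $\bm{D}'=0$. Therefore $\bm{D}=\sum_e d_e\,\bm{C}_e$, and returning to the original variables $\bm{I}'=\bm{I}''+\sum_e d_e\,\bm{C}_e$, which is the asserted decomposition into loop flows on cycles of $G$.

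I expect the only real subtlety to be the ``zero on the other chords'' observation together with a clean invocation of tree-flow uniqueness; once these are in place the argument is a coordinate-free version of the standard fact that the fundamental cycles of a spanning tree form a basis of $\ker A$ (equivalently $\rank A=n-1$ for a connected graph, so $\dim\ker A=m-n+1$). An alternative, purely combinatorial route would peel off one loop flow at a time --- follow a trail of edges carrying nonzero flow until it first revisits a vertex, thereby isolating a cycle inside the support of $\bm{D}$, subtract a suitable multiple of its loop flow to annihilate at least one component, and induct on the size of the support --- but the spanning-tree argument is shorter and dovetails with the remarks already recorded.
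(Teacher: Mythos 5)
Your proposal is correct, and its skeleton is the same as the paper's: both first subtract the two flow vectors to get $A(\bm{I}'-\bm{I}'')=0$ and thereby reduce the theorem to the claim that every element of $\ker A$ is a linear combination of loop flows. The only difference is how that claim is handled: the paper simply invokes the standard algebraic-graph-theory fact that $\ker A$ is the cycle space spanned by unit flows along cycles (citing Biggs), whereas you prove it from scratch --- you verify that each cycle vector $\bm{C}_\gamma$ lies in $\ker A$, and then use the fundamental cycles of a spanning tree together with the uniqueness of a Kirchhoff flow on a tree (the paper's Remark~(ii)) to show that any kernel element with chord components $d_e$ equals $\sum_e d_e\,\bm{C}_e$. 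This buys a self-contained argument and, as a by-product, exhibits the fundamental cycles as a basis of $\ker A$ (so $\dim\ker A=m-n+1$), at the cost of a page of bookkeeping that the paper outsources to the literature; your handling of the tree case and of the ``zero on the other chords'' point is sound, so there is no gap.
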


\begin{remark}
 In particular, Theorem~\ref{thm_loop_flow} implies that the flow distributions of two different solutions of Eq.~(\ref{PFsimple}) differ by a collection of loop flows. This result already appeared in slightly
 different form in the Supporting Information of Ref.~\onlinecite{Dor13}.
\end{remark}

\begin{proof}
If $m=n-1$, then $G$ is a tree and the flows on the lines are uniquely determined, which agrees with the statement because $G$ has no cycle. Therefore, from now on we assume $m\geq n$. Let 
$T$ be a spanning tree of $G$ and let us number the edges of $T$ from $1$ to $n-1$ and the edges of $G\setminus T$ from $n$ to $m$. Let $\bm{I}^{\circ}\coloneqq \bm{I}'-\bm{I}''$ be the 
difference between the two flow vectors. Then, for any $i$ we have
\begin{align*}
 \sum_{\ell}A_{i\ell}^\iStar I^{\circ}_{\ell}&=\sum_{\ell}A_{i\ell}^\iStar(I'_{\ell}-I''_{\ell})=P_i^\iStar-P_i^\iStar=0\, ,
\end{align*}
from which we conclude that $\bm{I}^{\circ}$ is a solution of Eq.~(\ref{mlfs_balance2}) with $\bm{P}=0$. What we need to show is therefore that any solution $\bm{I}$ of the system of equations
\begin{equation}\label{mlfs_balance_red2}
 \sum_{\ell}A_{i\ell}I_{\ell}=0\ ,\qquad i=1,...,n\, ,
\end{equation}
is a combination of loop flows. To do this we write Eq.~(\ref{mlfs_balance_red2}) in matricial form,
\begin{equation}\label{mlfs_balance_mat2}
 A\bm{I}=0\, .
\end{equation}
By definition, the set of solutions of Eq.~(\ref{mlfs_balance_mat2}) is the kernel of $A$, which is a subspace of $\mathcal{I}$. 

\paragraph*{}
In algebraic graph theory, $\ker(A)$ is referred to as the \emph{cycle space} and it is a standard result~\cite{Big93} that any element in $\ker(A)$ is a linear combination of unitary flows 
along the cycles of the network considered. This completes the proof.
\end{proof}

\begin{remark}
 Theorem~\ref{thm_loop_flow} is not restricted to the power flow problem. It generically applies 
 to any system of coupled oscillators with antisymmetric coupling, in particular to the Kuramoto model
 on any network.
\end{remark}

\paragraph*{}
Indexing the nodes along one such cycle, we write $P_{i,i+1}$ for the power flow from node $i$ to node $i+1$, with indices
taken modulo $n$. Theorem~\ref{thm_loop_flow} states in particular that multiple solutions to Eq.~(\ref{PFsimple}) can appear 
only when there are closed cycles in the network. 

\paragraph*{}
Alternatively, any flow $\{P_{i,i+1}\}$ on a cycle can be written as the sum of a reference solution, characterized by its flows $\{P_{i,i+1}^*\}$, 
and a loop flow of intensity $K\eps$, circulating around the cycle (see Fig.~\ref{cycle_model}),
\begin{align*}
 P_{i,i+1}^\iStar&=P_{i,i+1}^*+K\eps\, .
\end{align*}
We call $\eps\in\big[-1,1\big]$ the \emph{loop flow parameter}. 
It is only defined with respect to a reference solution, which is conveniently constructed from the $P_i$'s as
\begin{align*}
 P_{i,i+1}^*&\coloneqq\sum_{j=1}^iP_j\, ,\qquad i=1,...,n\, .
\end{align*}
Note that the reference solution depends on node numbering and any other flow distribution satisfying Kirchhoff's power balance can be taken as reference solution.
\begin{figure}[b]
 \begin{center}
  \includegraphics[width=175px]{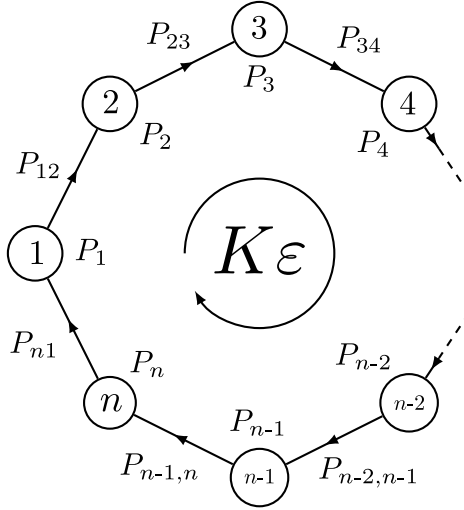}
  \caption{\textit{Cycle network of length $n$. The nodes inject/consume a power $P_i$, 
  while a power $P_{ij}$ is transmitted along the edge $\langle ij \rangle$. Edges
  correspond to lines with capacity $K$ (loop flow problem), to couplings of strength $K$ (Kuramoto problem) 
  or to tunnel barriers of transparency $\hbar K/8 e n_s$ (Josephson junction problem). 
  A loop flow of intensity $K\eps$ is circulating 
  around the cycle, where $\eps \in [-1,1]$.}}
  \label{cycle_model}
 \end{center}
\end{figure}

\paragraph*{}
As angle differences $\Delta_{ij}\coloneqq\theta_i-\theta_j$ are taken modulo $2\pi$ in the interval $(-\pi,\pi]$, 
their sum over the cycle gives an integer multiple of $2\pi$. This brings us to the definition of the winding number.

\begin{defi}
 For a given solution of the power flow Eq.~(\ref{PFsimple}) on the cycle, we define its \textbf{winding number} as the integer
\begin{align}\label{windingNumber}
 q\coloneqq\left(2\pi\right)^{-1}\sum_{i=1}^n\Delta_{i,i+1}\in\mathbb{Z}\, .
\end{align}
\end{defi}
The winding number characterizes a solution and is related to the loop flow. Unlike the latter, however, it is uniquely defined.
Eq.~(\ref{windingNumber}) quantizes the loop flow, i.e. it can take only discrete values.

\section{The number of stable solutions}\label{multipleSolutions}
Theorem~\ref{thm_loop_flow} connects the existence of multiple solutions to Eq.~(\ref{PFsimple}) to the presence of cycles traveled by 
quantized loop flows. The number of solutions is thus related to the number of acceptable, discrete loop flows. In the remainder of this manuscript, 
we focus on this problem in single-cycle graphs. 

\paragraph*{}
We first treat the case $K\to\infty$, where stable solutions necessarily have all angle differences in $[-\pi/2,\pi/2]$. 
We then consider the situation for finite $K$, where we show that the number of stable solutions decreases with $K$, that the angle difference along 
some of the lines can exceed $\pi/2$, but that it can happen on a single line at most.

\subsection{Angle differences and sum of angle differences}
A solution is fully characterized by the angle differences along the lines. 
These can be written as functions of the loop flow parameter $\eps$,
\begin{align}
 P_{i,i+1}^{\iStar}&=P_{i,i+1}^*+K\eps=K\sin(\Delta_{i,i+1})\quad\implies\quad\Delta_{i,i+1}=a_i(\eps)\, ,\label{angleDiff}
\end{align}
where there are two possible choices for each $a_i$,
\begin{align}\label{arcsine}
 a_i(\eps)&=\left\{
 \begin{array}{ll}
  \arcsin\left(\eps+{P_{i,i+1}^*}/{K}\right)&\implies\Delta_{i,i+1}\in\left[-{\pi}/{2},{\pi}/{2}\right]\, ,\\
  \pi-\arcsin\left(\eps+{P_{i,i+1}^*}/{K}\right)&\implies\Delta_{i,i+1}\in\left(-\pi,-{\pi}/{2}\right)\cup\left({\pi}/{2},\pi\right]\, .
 \end{array}
 \right.
\end{align}
Since the power transmitted along any link is bounded by $K$, we obtain bounds on $\eps$,
\begin{align*}
 -K\leq P_{i,i+1}^{\iStar}\leq K&\iff-1-{P_{i,i+1}^*}/{K}\leq\eps\leq1-{P_{i,i+1}^*}/{K}\, ,& i=1,...,n\, . 
\end{align*}
Thus $\eps\in\big[\eps_{\min},\eps_{\max}\big]$, with 
\begin{subequations}
\begin{align}
 \eps_{\min}&\coloneqq\max_{1\leq i\leq n}\left\{-1-{P_{i,i+1}^*}/{K}\right\}=-1-{P_{\min}^*}/{K}\, ,\label{eplus}\\
 \intertext{and}
 \eps_{\max}&\coloneqq\min_{1\leq i\leq n}\left\{1-{P_{i,i+1}^*}/{K}\right\}=1-{P_{\max}^*}/{K}\, ,\label{emoins}
\end{align}
\end{subequations}
where $P_{\min}^*\coloneqq\min_{i}\P{i}$ and $P_{\max}^*\coloneqq\max_iP_{i,i+1}^*$. 
Note that as soon as the $P_i$'s are not all equal to zero, $P_{\min}^*\neq P_{\max}^*$. 
We add an appropriate constant to the reference flow to make sure that $P_{\min}^*\neq0$ and $P_{\max}^*\neq0$\label{pmaxNeqZero}, 
which will facilitate our discussion without restricting its generality.

\paragraph*{}
As seen in Section~\ref{stability}, a solution is stable if all angle differences belong to
the interval $\big[-{\pi}/{2},{\pi}/{2}\big]$. In this situation, we can write the sum of angle differences around the cycle as a 
function of the parameter $\eps$,
\begin{align}\label{Afunction}
 \A_0(K,\eps)&\coloneqq\sum_{i=1}^n\Delta_{i,i+1}=\sum_{i=1}^n\arcsin\left(\eps+{\P{i}}/{K}\right)\, .
\end{align}
As the arcsine is continuous and increasing, the function $\A_0$ is also continuous and increasing with respect to $\eps$. Thus for fixed $K_0$, the function $\A_0(K_0,\eps)$ defines a one-to-one 
correspondence between the intervals 
\[
 \big[\eps_{\min}(K_0),\eps_{\max}(K_0)\big]\quad \longleftrightarrow\quad \big[\A_0(K_0,\eps_{\min}(K_0)),\A_0(K_0,\eps_{\max}(K_0))\big]\, .
\]
The sum of angle differences around the cycle has to be a multiple of $2\pi$, thus defining $\eps_q$ such that $\A_0(K_0,\eps_q)=2\pi q$ and 
\begin{align*}
 \Delta_{i,i+1}&=\arcsin\left(\eps_q+{\P{i}}/{K_0}\right)\, , \qquad i=1,...,N\, ,
\end{align*}
gives a stable solution of Eq.~(\ref{PFsimple}).

\paragraph*{}
Therefore, the number of solutions with $|\Delta_{i,i+1}|<\pi/2$ for all $i$ is straightforwardly given by the number of $q$'s such that $\A_0(K_0,\eps_q)=2\pi q$. 
Previous works have treated this case~\cite{Och10}, 
however allowing $|\Delta_{i,i+1}|>\pi/2$ renders the problem much more complicated. It has so far been solved only for unidirectional 
coupling~\cite{Rog04}. Our strategy for incorporating solutions with $|\Delta_{i,i+1}|>\pi/2$ is to first treat $K\to\infty$, where we show that $|\Delta_{i,i+1}|<\pi/2$, for all $i$, 
for stable solutions. The number of solutions is then easy to compute. 
Second, we generalize the study to finite $K$ and see that the number of solutions obtained for $K\to\infty$ is an upper bound on the number of solutions for any finite $K$.

\subsection{Infinite capacity}\label{infK}
The case $K\to\infty$ is equivalent to the identical oscillators case with $P_i=0$, for all $i$. In this limit, the bounds on $\eps$ are
\begin{align*}
 \lim_{K\to\infty}\eps_{\max}(K)&=1\, ,&\lim_{K\to\infty}\eps_{\min}(K)&=-1\, ,
\end{align*}
and thus 
\begin{align*}
 \lim_{K\to\infty}\A_0(K,\eps_{\min}(K))&=-{n\pi}/{2}\, ,&\lim_{K\to\infty}\A_0(K,\eps_{\max}(K))&={n\pi}/{2}\, .
\end{align*}
An $\eps_q$ is associated to each integer multiple of $2\pi$ in $\big[-{n\pi}/{2},{n\pi}/{2}\big]$ corresponding to a stable 
solution of Eq.~(\ref{PFsimple}). There are $\mathcal{N}=2\, {\rm Int}[n/4]+1$ such integers. 
This is illustrated in Fig.~\ref{Aplot_1cycle}.
\begin{figure}[b]
\begin{center}
 \includegraphics[width=225px]{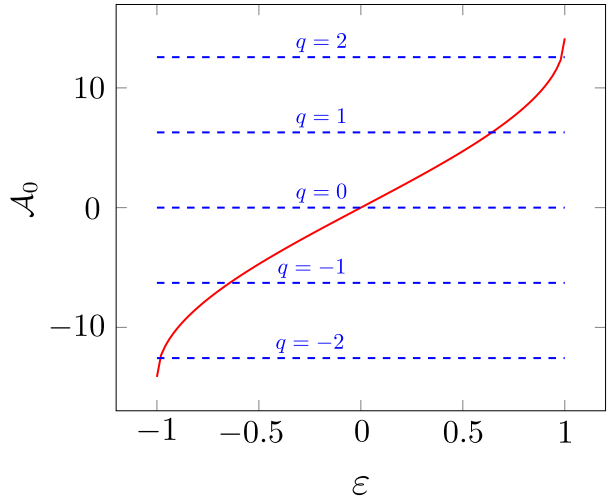}
 \caption{\textit{Plot of $\A_0(K,\eps)$ as a function of $\eps$ (red), for a cycle network of length $n=9$, with $K\to\infty$. 
 Horizontal dashed lines correspond to $\A_0=2\pi q$ with $q$ values indicated.
 Each intersection of the red line with a blue dashed line gives a stable solution of Eq.~(\protect\ref{PFsimple}).}}
 \label{Aplot_1cycle}
\end{center}
\end{figure}

\begin{thm}\label{largeK}
 For $K\to\infty$, any stable solution of the power flow Eq.~(\ref{PFsimple}) on a cycle network 
 has all angle differences in 
 $\big[-{\pi}/{2},{\pi}/{2}\big]$. Furthermore all angle differences are 
 equal to ${2\pi q}/{n}$, where $q$ is the winding number of the solution.
\end{thm}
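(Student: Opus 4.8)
The plan is to use the equivalence, stated above, between the $K\to\infty$ limit and the identical-oscillator problem $P_i=0$ for all $i$, so that it suffices to classify the stable solutions of $0=K\sum_{j\sim i}\sin(\theta_i-\theta_j)$ on the $n$-node cycle. First I would observe that Kirchhoff's law with vanishing injections (equivalently, Theorem~\ref{thm_loop_flow} applied with the uniform solution $\theta_i\equiv 0$ as reference flow) forces the flow $K\sin(\Delta_{i,i+1})$ to be the same on every edge; write this common value as $Ks$ with $s\in[-1,1]$. Hence $\sin(\Delta_{i,i+1})=s$ for all $i$, so each angle difference is of one of two types: a \emph{type-I} value $\arcsin s$, for which $\cos(\Delta_{i,i+1})=+\sqrt{1-s^2}\ge0$, or a \emph{type-II} value ($\pi-\arcsin s$ or $-\pi-\arcsin s$, according to the sign of $s$), for which $\cos(\Delta_{i,i+1})=-\sqrt{1-s^2}<0$; these are exactly the two branches appearing in~(\ref{arcsine}). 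Let $l$ denote the number of type-II edges, and note that the two branches coincide when $|s|=1$, so we may and do assume $|s|<1$ whenever $l$ is strictly between $0$ and $n$.

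Next I would read off the stability matrix~(\ref{stabilityM}) of such a solution, on which $|\cos(\Delta_{i,i+1})|=\sqrt{1-s^2}$ on every edge, and distinguish three cases. \emph{If $l=0$}, every $\cos(\Delta_{i,i+1})\ge0$, so $M$ is negative semidefinite by the Gershgorin argument recalled in Section~\ref{stability} and the solution is stable; moreover all angle differences then equal the single value $\delta\coloneqq\arcsin s\in[-\pi/2,\pi/2]$, and the definition~(\ref{windingNumber}) of the winding number gives $2\pi q=n\delta$, i.e. $\delta=2\pi q/n$, which is precisely the asserted form. \emph{If $l=n$}, all angle differences have modulus strictly larger than $\pi/2$ and the solution is linearly unstable by the second Gershgorin observation of Section~\ref{stability}. \emph{If $0<l<n$}, the cycle being connected and carrying both edge types, some vertex $i$ is incident to one type-I edge, say $\langle i-1,i\rangle$, and one type-II edge $\langle i,i+1\rangle$; then $M_{ii}=-K(\sqrt{1-s^2}-\sqrt{1-s^2})=0$ while $M_{i,i-1}=K\sqrt{1-s^2}>0$, so the test vector $\bm{x}$ with $x_i=1$, $x_{i-1}=t$ and all other components zero satisfies $\bm{x}^\top M\bm{x}=M_{ii}+2tM_{i,i-1}+t^2M_{i-1,i-1}=2tK\sqrt{1-s^2}+t^2M_{i-1,i-1}>0$ for all sufficiently small $t>0$; hence $M$ has a positive eigenvalue and the solution is unstable.

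Putting the cases together, a stable solution necessarily has $l=0$, which gives both conclusions of the theorem; in the degenerate endpoint $|s|=1$ (possible only when $4$ divides $n$) all angle differences equal $\pm\pi/2=2\pi q/n$ with $q=\pm n/4$ and $M\equiv0$, still consistent with the statement. As a by-product, counting the integers $q$ with $|2\pi q/n|\le\pi/2$ recovers $\mathcal{N}=2\,{\rm Int}[n/4]+1$. I expect the only genuinely delicate step to be the case $0<l<n$: one has to be sure that a mixed-type vertex exists, that the diagonal entry of $M$ there really vanishes, and that the term linear in $t$ in $\bm{x}^\top M\bm{x}$ actually dominates, so that the instability is not an artifact of the chosen test direction; the endpoint bookkeeping at $|s|=1$ is a routine additional check.
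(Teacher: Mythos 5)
Your proposal is correct and follows essentially the same route as the paper's proof: reduce to a uniform value of $\sin(\Delta_{i,i+1})$ on all edges (via the $P_i=0$ equivalence rather than the explicit $K\to\infty$ limit), split into the all-positive-cosine, all-negative-cosine and mixed cases, and detect instability in the mixed case from the vanishing diagonal entry at a vertex joining the two edge types. Your test-vector computation of $\bm{x}^\top M\bm{x}$ is just an equivalent reformulation of the paper's Sylvester-criterion argument on the $2\times2$ principal minor, and your winding-number step and endpoint check at $|s|=1$ match the paper's conclusion.
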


\begin{remarks}
\begin{enumerate}[(i)]
 \item The \emph{principal minors} of a matrix $A$ are the determinants of the square submatrices of $A$ with the same row and column indices. 
 \emph{Sylvester's criterion} states that a matrix is positive semi-definite if and only if all its principal minors are non-negative~\cite{Hor85}.
 \item The result of Theorem~\ref{largeK} was already known for unidirectional coupling~\cite{Rog04}. 
 Here we extend this result to bidirectional interactions. Furthermore, our 
 approach allows to relate the finite $K$ situation to the infinite $K$ situation.
 \end{enumerate}
\end{remarks}

\begin{proof}
The power flow along line $\langle i,i+1\rangle$ is 
\[
 P_{i,i+1}^\iStar=K\sin(\Delta_{i,i+1})=P_{i,i+1}^*+K\eps\, ,
\]
where $\{P_{i,i+1}^*\}$ is a reference solution constructed from finite powers.
Thus when $K\to\infty$, the sine of the angle difference along every line of the cycle tends to the same value,
\[
 \lim_{K\to\infty}\sin(\Delta_{i,i+1})=\eps\, .
\]
This implies that the angle difference along each line of the network belongs to the set $\{\arcsin(\eps),\pi-\arcsin(\eps)\}$ 
and thus the cosine of the angle differences along all the lines takes the same absolute value with either positive or negative sign. 
First of all, if all angle differences are $\arcsin(\eps)$, the stability matrix defined in Eq.~\eqref{stabilityM} is easily expressed as
\[
 M=Kc
 \begin{pmatrix}
  -2 & 1 & & & 1\\
  1 & -2 & \ddots & &\\
  & \ddots & \ddots & & 1\\
  1 & & & 1 & -2
 \end{pmatrix}\, ,
\]
where $c\coloneqq\cos(\arcsin(\eps))=\sqrt{1-\eps^2}$. This matrix is negative semi-definite and has only non-positive eigenvalues by 
Gershgorin's circle theorem~\cite{Hor85}. Thus the solution is stable. 
Now, if all angle differences are $\pi-\arcsin(\eps)$, then all cosines are negative and the stability matrix is obviously 
positive semi-definite. The solution is then unstable. 
Let us now consider the mixed case where at least one angle difference is $\arcsin(\eps)$ and one is $\pi-\arcsin(\eps)$. 
In this case, there exists at least one node $i$ such that $\Delta_{i-1,i}=\arcsin(\eps)$ and $\Delta_{i,i+1}=\pi-\arcsin(\eps)$
and the corresponding stability matrix has the form
\begin{equation}\label{submatrix}
 M'=Kc
 \begin{pmatrix}
  \ddots & \ddots & & & \\
  \ddots & x & 1 & 0 & \\
  & 1 & 0 & -1 & \\
  & 0 & -1 & y & \ddots\\
  & & & \ddots & \ddots
 \end{pmatrix}\, .
\end{equation}
The principal minor of $-M'$ with row and column indices $\{i,i+1\}$ is
\[
 \begin{vmatrix}
  0 & +1\\
  +1 & -y
 \end{vmatrix}=-1\, ,
\]
which, by Sylvester's criterion~\cite{Hor85}, implies that $M'$ is not negative semi-definite. 
In other words, $M'$ has at least one positive eigenvalue and thus the solution 
is unstable. From this we conclude that the stable solutions for sufficiently large $K$ 
all have angle differences in $\big[-{\pi}/{2},{\pi}/{2}\big]$. They are captured by finding the intersections of $\A_0(K,\eps)$
with integer multiples of $2\pi$ as illustrated in Fig.~\ref{Aplot_1cycle}.

\paragraph*{}
Let $q$ be the winding number of a stable solution for $K\to\infty$. As all angle differences have the same value $\Delta\in\big[-{\pi}/{2},{\pi}/{2}\big]$, we have
\begin{align*}
 2\pi q=\sum_{i=1}^n\Delta_{i,i+1}=n\Delta&\implies\Delta={2\pi q}/{n}\, .
\end{align*}
The corresponding angles are $\theta_i=-{2\pi q i}/{n}$, taken in the interval $\big(-\pi,\pi\big]$.
\end{proof}

\subsection{Finite capacity}\label{finK}
We now consider finite values for $K$ and $P_i$'s not all equal to zero.
We first show that the number of solutions to the power flow Eq.~(\ref{PFsimple}) with all angle differences 
in $\big[-\pi/2,\pi/2\big]$ decreases with $K$. Second, we show that for finite $K$, there exist 
stable solutions with one angle difference in $\big(-\pi,-\pi/2\big)\cup\big(\pi/2,\pi\big]$, 
and we relate them to solutions at larger $K$ with all angle differences in $\big[-\pi/2,\pi/2\big]$. 
This gives an analytical confirmation of the numerical observations of Tilles et al.~\cite{Til11}, 
and of Roy and Lahiri~\cite{Roy12}.

\begin{prop}\label{prop:A0(K,eps_min) monotonous with K}
 For a one-cycle network with $n$ nodes, if $K$ decreases, then $\A_0(K,\eps_{\min})$ increases and $\A_0(K,\eps_{\max})$ decreases. 
\end{prop}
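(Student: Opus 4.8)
The plan is to insert the closed forms for $\eps_{\min}(K)$ and $\eps_{\max}(K)$ from Eqs.~(\ref{eplus})--(\ref{emoins}) into the definition~(\ref{Afunction}) of $\A_0$, so that the statement reduces to the monotonicity of $\arcsin$ combined with the elementary fact that $K\mapsto 1/K$ is decreasing. Setting $d_i\coloneqq\P{i}-P_{\min}^*\ge 0$ and $e_i\coloneqq P_{\max}^*-\P{i}\ge 0$, one has $\eps_{\min}(K)+\P{i}/K=-1+d_i/K$ and $\eps_{\max}(K)+\P{i}/K=1-e_i/K$, hence
\begin{align*}
 \A_0\big(K,\eps_{\min}(K)\big) &= \sum_{i=1}^n\arcsin\!\left(-1+\frac{d_i}{K}\right)\, ,
 & \A_0\big(K,\eps_{\max}(K)\big) &= \sum_{i=1}^n\arcsin\!\left(1-\frac{e_i}{K}\right)\, .
\end{align*}

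First I would check that every arcsine argument above stays in $[-1,1]$, so that each summand is well defined and $\arcsin$ can be used as an increasing function on the whole interval. Nonnegativity of $d_i$ gives $-1+d_i/K\ge -1$; and the consistency requirement $\eps_{\min}(K)\le\eps_{\max}(K)$, which is exactly $P_{\max}^*-P_{\min}^*\le 2K$, yields $d_i\le P_{\max}^*-P_{\min}^*\le 2K$, i.e.\ $-1+d_i/K\le 1$. The same bounds applied to $e_i$ give $-1\le 1-e_i/K\le 1$.

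Then the conclusion is immediate: as $K$ decreases, $1/K$ increases, so $-1+d_i/K$ increases and, by monotonicity of $\arcsin$, each term $\arcsin(-1+d_i/K)$ is nondecreasing; thus $\A_0(K,\eps_{\min}(K))$ is nondecreasing. Since the $P_i$ are not all zero we have $P_{\min}^*\neq P_{\max}^*$, so any index $i$ with $\P{i}=P_{\max}^*$ has $d_i=P_{\max}^*-P_{\min}^*>0$ and the corresponding term strictly increases, whence $\A_0(K,\eps_{\min}(K))$ strictly increases. Symmetrically, $1-e_i/K$ decreases as $K$ decreases, so $\A_0(K,\eps_{\max}(K))$ decreases, strictly because any index with $\P{i}=P_{\min}^*$ has $e_i>0$.

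I do not expect a genuine obstacle: the only point requiring care is the domain bookkeeping, namely that the arcsine arguments never leave $[-1,1]$, which rests entirely on $P_{\max}^*-P_{\min}^*\le 2K$ (equivalently, that the admissible interval $[\eps_{\min},\eps_{\max}]$ is nonempty). An alternative route is to differentiate $\A_0(K,\eps_{\min}(K))$ with respect to $K$ via the chain rule, using $\partial_\eps\arcsin>0$ together with $\eps_{\min}'(K)=P_{\min}^*/K^2$; but the substitution above sidesteps the derivative of $\eps_{\min}$ altogether and disposes of the boundary terms $\arcsin(\pm 1)=\pm\pi/2$ without fuss.
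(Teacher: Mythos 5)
Your proposal is correct and takes essentially the same route as the paper: both substitute the explicit boundary values $\eps_{\min}(K)=-1-P_{\min}^*/K$ and $\eps_{\max}(K)=1-P_{\max}^*/K$ into $\A_0$ so that each summand depends on $K$ only through $\arcsin\bigl(-1+(\P{i}-P_{\min}^*)/K\bigr)$ or $\arcsin\bigl(1-(P_{\max}^*-\P{i})/K\bigr)$, the paper then reading off the sign of $d\A_0/dK$ while you invoke monotonicity of $\arcsin$ and of $K\mapsto 1/K$ directly. Your variant also neatly sidesteps the nondifferentiability of $\arcsin$ at argument $\pm1$ (the terms with $\P{i}=P_{\min}^*$ or $P_{\max}^*$), which the paper handles implicitly by restricting its sum to the indices with $P_{j,j+1}^*>P_{\min}^*$.
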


\begin{proof}
From Eqs.~(\ref{emoins}) and (\ref{Afunction}) the derivative of $\A_0$ with respect to $K$ reads
\begin{align*}
 \frac{d\A_0(K,\eps_{\min})}{dK}&={\sum}'\left[{1-\left(-1+\frac{\P{i}-P_{\min}^*}{K}\right)^2}\right]^{-\frac{1}{2}}\frac{P_{\min}^*-\P{i}}{K^2}\, ,
\end{align*}
where $\Sigma'$ indicates that the sum is taken over indices $j$ such that $P_{j,j+1}^*>P_{\min}^*$. This sum is obviously non-positive. In the same way it is easily seen that
\[
 \frac{d\A_0(K,\eps_{\max})}{dK}\geq0\, .
\]
\end{proof}

\paragraph*{}
Proposition~\ref{prop:A0(K,eps_min) monotonous with K} implies that as $K$ decreases, the interval of values of $\A_0$ 
gets smaller and contains fewer and fewer multiples of $2\pi$. We show now that for finite capacities, the stable solutions are directly 
related to the stable solutions for $K\to\infty$, even if some of them have angle differences in $\big(-\pi,-{\pi}/{2}\big)\cup\big({\pi}/{2},\pi\big]$. First we define 
\begin{align*}
 \A_j(K,\eps)&\coloneqq\sum_{i\neq j}\arcsin\left(\eps+{\P{i}}/{K}\right)+\pi-\arcsin\left(\eps+{\P{j}}/{K}\right)\, ,\qquad j=1,...,n\, .
\end{align*}
The function $\A_0$ is the sum of angle differences all taken in the interval $\big[-{\pi}/{2},{\pi}/{2}\big]$ and 
for $j=1,...,n$, the function $\A_j$ is this sum 
with one angle difference, the $j^{\text{th}}$, taken in $\big(-\pi,-{\pi}/{2}\big)\cup\big({\pi}/{2},\pi\big]$. 
We also introduce the following notation
\begin{align*}
 c_i&\coloneqq\cos(\Delta_{i,i+1})\, .
\end{align*}
The sign of $c_i$ depends on our choice for $\Delta_{i,i+1}$,
\begin{align*}
 c_i&=\left\{
 \begin{array}{ccc}
  \cos\left[\arcsin\left(\eps+{\P{i}}/{K}\right)\right]&=&\sqrt{1-\left(\eps+{\P{i}}/{K}\right)^2}\, ,\\
  \cos\left[\pi-\arcsin\left(\eps+{\P{i}}/{K}\right)\right]&=&-\sqrt{1-\left(\eps+{\P{i}}/{K}\right)^2}\, .
 \end{array}
 \right.
\end{align*}
The domain $\mathcal{D}$, in the $(K,\eps)$-plane, where the functions $\A_j$ are defined, is such that each arcsine 
is well-defined,
\begin{align}\label{eq:D}
 \mathcal{D}=\left\{(K,\eps)\colon\eps+{P_{\min}^*}/{K}\geq-1,\eps+{P_{\max}^*}/{K}\leq1\right\}\, .
\end{align}
By definition, in the interior of $\D$, the $c_i$'s are nonzero. Let us define the upper and lower boundaries of $\D$,
\begin{align}\label{eq:D01}
\begin{split}
 \D_1&\coloneqq\left\{(K,\eps)\colon\eps+{P_{\max}^*}/{K}=1\right\}\, ,\\
 \D_0&\coloneqq\left\{(K,\eps)\colon\eps+{P_{\min}^*}/{K}=-1\right\}\, .
\end{split}
\end{align}
We next denote by $j_0$ and $j_1$ the indices such that $P_{j_0,j_0+1}^*=P_{\min}^*$ and $P_{j_1,j_1+1}^*=P_{\max}^*$ respectively. 
Note that for $(K,\eps)\in\D_1$ [resp. $(K,\eps)\in\D_0$], the functions $\A_0(K,\eps)$ and $\A_{j_1}(K,\eps)$ [resp. $\A_{j_0}(K,\eps)$] have the same value.

\begin{remark}
It is possible that multiple lines carry the same maximal or minimal power. In this case these indices are not uniquely 
defined, but we are free to choose any $j_0$ and $j_1$ satisfying $P_{j_0,j_0+1}^*=P_{\min}^*$ and $P_{j_1,j_1+1}^*=P_{\max}^*$.
\end{remark}

\paragraph*{}
For any choice of $a_i$'s in Eq.~(\ref{arcsine}), any point $(K,\eps)\in\D$ such that $\sum_ia_i=2\pi q$ is a solution (not necessarily stable) of Eq.~(\ref{PFsimple}). 
Hence we now study the $2\pi q$-level sets of $\A_j$, for $q\in\mathbb{Z}$ and $j=0,...,n$. 
Note first that as $\A_j$ is smooth in the interior of the domain $\D$ for any $j$, the Implicit Function Theorem~\cite{Che01} implies that its level sets are level curves.
For any $K_0$, we call $\mathcal{S}(K_0)\subset\big[\eps_{\min},\eps_{\max}\big]$ the set of $\eps$-values 
corresponding to stable solutions of the power flow Eq.~(\ref{PFsimple}), 
i.e. such that there exists a choice of $\{a_i\}$ in Eq.~(\ref{arcsine}) for which
\begin{align*}
 \sum_i a_i(K_0,\eps)&=2\pi q,\qquad q\in\mathbb{Z}\, .
\end{align*}
Let $\mathcal{N}(K_0)\coloneqq|\mathcal{S}(K_0)|$ be its cardinality. 
The main results of this section are the following theorem on the properties of $\mathcal{N}(K)$ and its corollary.

\begin{thm}\label{mainThm}
 The number of stable solutions of the power flow equations, $\mathcal{N}(K)$ is a monotonically increasing function of $K$.
\end{thm}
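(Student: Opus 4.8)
The plan is to organize the count by winding number and reduce the statement to the monotonicity of two scalar functions of $K$. Every stable solution carries a winding number $q$ (Eq.~\eqref{windingNumber}), and — using that a stable solution has at most one angle difference outside $\big[-\pi/2,\pi/2\big]$, which I would either invoke (it is among the results of this section, see Section~\ref{finK}) or establish first by the Sylvester-criterion argument of Theorem~\ref{largeK}, now subtracting two rank-one terms from the Laplacian of the cycle with two edges deleted (a matrix with a two-dimensional kernel), which makes it indefinite — each stable solution is of type $l=0$ (all $|\Delta_{i,i+1}|\le\pi/2$) or of type $l=1$ (exactly the $j$-th difference in $\big(-\pi,-\pi/2\big)\cup\big(\pi/2,\pi\big]$). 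For $l=0$ the solution of winding $q$ solves $\A_0(K,\eps)=2\pi q$, is unique because $\A_0(K,\cdot)$ is strictly increasing, and is automatically stable by Gershgorin (Section~\ref{stability}). For $l=1$ I would first record the stability criterion: with $c_j<0$ and all other $c_i>0$, the matrix \eqref{stabilityM} is $-K$ times the weighted cycle Laplacian; deleting edge $j$ leaves the positive Laplacian of a path, and a Schur-complement (effective-resistance) computation shows negative semidefiniteness is equivalent to $\sum_i 1/c_i\le 0$, i.e.\ to $\sumcos{n}\ge 0$. The same criterion forces $|c_j|=\min_i|c_i|$, which pins the flipped line to the one of extremal reference flow: $j_1$ when the flipped angle is positive, $j_0$ when it is negative.

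The next step is to glue the branches. Since $\partial\A_{j_1}/\partial\eps=\sum_i 1/c_i$, the function $\A_{j_1}(K,\cdot)$ attains its maximum at the stability boundary $\eps_s(K)$ (where $\sum_i 1/c_i=0$) and decreases on $[\eps_s,\eps_{\max}]$ from $g_1(K):=\A_{j_1}(K,\eps_s(K))$ to $\A_{j_1}(K,\eps_{\max})=\A_0(K,\eps_{\max})$; hence the $l=1$ stable solutions with a positive flipped angle correspond bijectively to $2\pi q\in[\A_0(K,\eps_{\max}),g_1(K)]$, and symmetrically those with a negative flipped angle to $2\pi q\in[g_0(K),\A_0(K,\eps_{\min})]$, where $g_0(K):=\min_\eps\A_{j_0}(K,\eps)$. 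Since the $l=0$ count is $\#\{q:2\pi q\in[\A_0(K,\eps_{\min}),\A_0(K,\eps_{\max})]\}$ and the three intervals share the endpoints $\A_0(K,\eps_{\min})$ and $\A_0(K,\eps_{\max})$, they merge into
\[
 \mathcal{N}(K)=\#\big\{q\in\mathbb{Z}\colon g_0(K)\le 2\pi q\le g_1(K)\big\}\, .
\]
It then suffices to prove that $g_1$ is non-decreasing and $g_0$ non-increasing in $K$: then $[g_0(K),g_1(K)]$ only grows with $K$, so it contains at least as many multiples of $2\pi$, and $\mathcal{N}$ is non-decreasing. (Evaluating at $K\to\infty$, where $[g_0,g_1]\to[-n\pi/2,n\pi/2]$ by Theorem~\ref{largeK}, then gives the corollary $\mathcal{N}(K)\le 2\,{\rm Int}[n/4]+1$; Proposition~\ref{prop:A0(K,eps_min) monotonous with K} is the $l=0$ part of the same fact.)

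The hard part will be the monotonicity of $g_1$ (that of $g_0$ following by the reflection $q\mapsto -q$). Here I would use the envelope theorem: as $\eps_s(K)$ is an interior critical point of $\A_{j_1}(K,\cdot)$, one has $g_1'(K)=\partial\A_{j_1}/\partial K$ evaluated at $(K,\eps_s(K))$, and substituting $\P{i}=K(\sin\Delta_{i,i+1}-\eps)$ together with $\sum_i 1/c_i=0$ collapses this to $g_1'(K)=-\tfrac1K\sum_i\tan\Delta_{i,i+1}\big|_{\eps_s}$. So the whole argument reduces to showing $\sum_i\tan\Delta_{i,i+1}\le 0$ at the stability boundary; I expect to deduce this from $\sum_i\sec\Delta_{i,i+1}=0$, via $\tan=\sin\cdot\sec$, $\cos\Delta_{j_1}<0$, and the inequality $|\sin\Delta_i|=|\eps+\P{i}/K|\le\eps+P_{\max}^*/K=\sin\Delta_{j_1}$, which holds once the reference flow is centered ($P_{\min}^*+P_{\max}^*=0$) and $\eps_s\ge 0$ — a condition I would check on the $K$-range where $l=1$ solutions occur, with a supplementary direct argument for the remaining small-$K$ regime. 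This sign control at the stability boundary is the genuine analogue, for the mixed $l=1$ regime, of the short principal-minor computation that settled Theorem~\ref{largeK}.
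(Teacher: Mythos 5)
Your route is genuinely different from the paper's: the paper works in the $(K,\eps)$-plane, follows the $2\pi q$-level curves of the $\A_j$, characterizes $\lambda_2=0$ by $\sum_k c_k^{-1}=0$ (Lemma~\ref{bla}), shows $\lambda_2$ changes sign at most once along such a curve (Lemma~\ref{blabla}), and concludes by a continuation argument; you instead work at fixed $K$, count solutions per winding number, and reduce everything to the monotonicity of the two envelope values $g_0,g_1$. Several of your ingredients are correct and even sharper than the paper's: the Schur-complement/Cauchy--Schwarz criterion that, with exactly one negative cosine, the solution is stable if and only if $\sum_i c_i^{-1}\le 0$ (the paper only characterizes the marginal case $\lambda_2=0$), and the consequent pinning of the flipped edge to $j_1$ or $j_0$. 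However, there is a genuine gap in the gluing step: you assert that at fixed $K$ the stable part of the $j_1$-flipped branch is a single interval $[\eps_s(K),\eps_{\max}(K)]$, i.e.\ that $\partial\A_{j_1}/\partial\eps=\sum_kc_k^{-1}$ vanishes at a unique, nondegenerate $\eps_s$. Your bijection ``stable $l=1$ solutions $\leftrightarrow$ $2\pi q\in[\A_0(K,\eps_{\max}),g_1(K)]$'', hence your formula $\mathcal N(K)=\#\{q\colon g_0(K)\le 2\pi q\le g_1(K)\}$, and also the implicit-function/envelope step (which needs the critical point to be nondegenerate), all rely on this; if $\sum_kc_k^{-1}$ had several zeros in $\eps$, the stable set would be a union of intervals, one $q$ could be represented by more than one stable solution, and $g_1$ would not control the count. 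The missing ingredient is exactly the computation of Lemma~\ref{blabla}: at any zero of $\sum_kc_k^{-1}$, its $\eps$-derivative is a sum of one-signed terms proportional to $\P{k}-\P{j}$, hence strictly negative for $j=j_1$ (positive for $j=j_0$), so the zero is unique and nondegenerate, and the stable set is indeed $[\eps_s,\eps_{\max}]$. You must add this argument (or an equivalent one).

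The second weak point is your final sign argument, which as written is conditional on unverified hypotheses (centering $P_{\min}^*+P_{\max}^*=0$, $\eps_s\ge0$, plus an unspecified ``supplementary argument'' at small $K$); $\eps_s\ge 0$ need not hold, but none of this is needed. From $\sum_kc_k^{-1}=0$ one gets $c_{j_1}^{-1}=-\sum_{k\neq j_1}c_k^{-1}$, hence $\sum_k\tan\Delta_{k,k+1}=\sum_{k\neq j_1}\bigl(\sin\Delta_{k,k+1}-\sin\Delta_{j_1,j_1+1}\bigr)/c_k\le 0$, since $\sin\Delta_{k,k+1}-\sin\Delta_{j_1,j_1+1}=\bigl(\P{k}-P_{\max}^*\bigr)/K\le0$ and $c_k>0$; equivalently $\partial\A_{j_1}/\partial K=\sum_{k\neq j_1}\bigl(P_{\max}^*-\P{k}\bigr)/(K^2c_k)\ge0$ at the critical point, which is precisely the computation appearing in the paper's Lemma~\ref{blabla2}. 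With these two repairs (uniqueness/nondegeneracy of $\eps_s$, and the unconditional sign of $\partial\A_{j_1}/\partial K$ there), your proof goes through and in fact yields a slightly stronger conclusion than the paper's: $\mathcal N(K)$ equals the number of multiples of $2\pi$ in $[g_0(K),g_1(K)]$, with exactly one stable solution per winding number.
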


\begin{cor}\label{mainCor}
 The value 
 \begin{align*}
  \mathcal{N}_{\infty}&\coloneqq\lim_{K\to\infty}\mathcal{N}(K)=2\, {\rm Int}[n/4]+1\, ,
 \end{align*}
 is an upper bound on the number of stable solutions of Eq.~(\ref{PFsimple}) on a cycle network, independently of $K$ and $\{P_i\}$.
\end{cor}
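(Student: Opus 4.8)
The plan is to deduce the corollary from Theorem~\ref{mainThm} together with the $K\to\infty$ analysis already carried out in Section~\ref{infK}. Since Theorem~\ref{mainThm} states that $\mathcal{N}(K)$ is non-decreasing in $K$, the limit $\mathcal{N}_\infty=\lim_{K\to\infty}\mathcal{N}(K)$ exists in $\mathbb{N}\cup\{\infty\}$ and $\mathcal{N}(K)\le\mathcal{N}_\infty$ for every finite $K$; since the value of $\mathcal{N}_\infty$ computed below will involve neither $K$ nor the $P_i$, this is exactly the uniform upper bound claimed. So the whole content reduces to evaluating $\mathcal{N}_\infty$.

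To compute $\mathcal{N}_\infty$ I would argue as follows. By Theorem~\ref{largeK}, in the limit $K\to\infty$ every stable solution has all angle differences in $\big[-\pi/2,\pi/2\big]$, hence corresponds to a point where $\A_0(K,\eps)=2\pi q$ for some $q\in\mathbb{Z}$. For each fixed $K$ the map $\eps\mapsto\A_0(K,\eps)$ is, as noted in Section~\ref{infK}, a continuous strictly increasing bijection from $\big[\eps_{\min},\eps_{\max}\big]$ onto $\big[\A_0(K,\eps_{\min}),\A_0(K,\eps_{\max})\big]$, so $\mathcal{N}(K)$ is the number of integers $q$ with $2\pi q$ in that image interval. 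By Proposition~\ref{prop:A0(K,eps_min) monotonous with K} the image interval grows as $K$ grows, and by the endpoint computations $\A_0(K,\eps_{\min})\to-n\pi/2$, $\A_0(K,\eps_{\max})\to n\pi/2$ of Section~\ref{infK} it fills $\big[-n\pi/2,n\pi/2\big]$ in the limit. Counting the integers $q$ with $2\pi q\in\big[-n\pi/2,n\pi/2\big]$, i.e.\ $|q|\le n/4$, yields $2\,{\rm Int}[n/4]+1$, which is $\mathcal{N}_\infty$.

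I expect two points to demand the most care, both essentially upstream of the corollary itself. First, the genuine work is Theorem~\ref{mainThm}: establishing monotonicity of $\mathcal{N}(K)$ requires the finite-$K$ bifurcation picture of Section~\ref{finK} — that a stable solution has at most one angle difference in $(\pi/2,\pi]$, and that such $l=1$ solutions are smooth continuations, at the same winding number, of $l=0$ solutions at larger $K$, realized via the branches $\A_{j_0},\A_{j_1}$ near the boundary $\partial\D$. Second, within the corollary one must be careful interchanging $K\to\infty$ with the integer-valued count $\mathcal{N}(K)$: because the reference flows $\{\P{i}\}$ are not all equal, $\A_0(K,\eps_{\max})<n\pi/2$ strictly at every finite $K$, so the plain $\A_0$-count omits the extremal winding numbers $|q|={\rm Int}[n/4]$ precisely when $4\mid n$; these extremal solutions live on the $\A_{j_0},\A_{j_1}$ branches (where $c_{j_0}$, resp.\ $c_{j_1}$, vanishes on $\partial\D$ and $\A_{j_0}$, resp.\ $\A_{j_1}$, agrees there with $\A_0$), and it is cleanest to identify $\mathcal{N}_\infty$ with the count in the identical-oscillator limit $K=\infty$, where $\A_0$ is a bijection onto the \emph{closed} interval $\big[-n\pi/2,n\pi/2\big]$. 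Tracking which branch carries which $q$ near $\partial\D$ is the fiddly step; everything else is bookkeeping.
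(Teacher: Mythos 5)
Your proposal is correct and takes essentially the same route as the paper, which treats Corollary~\ref{mainCor} as an immediate consequence of Theorem~\ref{mainThm} (monotonicity of $\mathcal{N}(K)$ in $K$) combined with the $K\to\infty$ count of Section~\ref{infK} and Theorem~\ref{largeK}, where the bijectivity of $\A_0$ onto $\big[-n\pi/2,n\pi/2\big]$ yields exactly the integers $|q|\leq n/4$, i.e.\ $2\,{\rm Int}[n/4]+1$ stable solutions. The extra care you flag about the extremal winding numbers when $4\mid n$ (handled via the $\A_{j_0},\A_{j_1}$ branches) is a fine point the paper passes over silently, but it does not alter the argument.
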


\paragraph*{}
The proof of Theorem~\ref{mainThm} relies on five lemmas. A major ingredient of the proof is that the functions $\A_j(K,\eps)$, for $j=1,...,n$, have no critical points. This fact and 
Lemma~\ref{finally} give precise informations about the shape of the level curves of $\A_j$.

\begin{lem}\label{blabla2}
 For $j\in\{1,...,n\}$, the function $\A_j$ has no critical point in the interior of $\D$.
\end{lem}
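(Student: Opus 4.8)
The plan is to evaluate $\nabla\A_j$ explicitly and show that the two equations $\partial_\eps\A_j=0$ and $\partial_K\A_j=0$ cannot hold at the same point of the interior of $\D$. Write $x_i\coloneqq\eps+\P{i}/K$, so that on the interior of $\D$ every $x_i$ lies in $(-1,1)$ — this is exactly where each $\arcsin$ is differentiable — and set $b_i\coloneqq(1-x_i^2)^{-1/2}=1/|c_i|>0$, noting $b_i\geq1$. Since $\partial_\eps x_i=1$ and $\partial_K x_i=-\P{i}/K^2$, differentiating the definition of $\A_j$ (the additive constant $\pi$ drops out) gives $\partial_\eps\A_j=\sum_{i\neq j}b_i-b_j$ and $\partial_K\A_j=K^{-2}\big(b_j\P{j}-\sum_{i\neq j}b_i\P{i}\big)$. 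A critical point in the interior of $\D$ would therefore require simultaneously
\[
 \text{(1)}\ \ \sum_{i\neq j}b_i=b_j\, ,\qquad\qquad
 \text{(2)}\ \ \sum_{i\neq j}b_i\P{i}=b_j\P{j}\, .
\]

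The key structural point is that the $x_i$ differ only by the constants $\P{i}/K$, so $\P{i}-\P{j}=K(x_i-x_j)$. Substituting (1) into (2) gives $\sum_{i\neq j}b_i(\P{i}-\P{j})=0$, hence $\sum_{i\neq j}b_i(x_i-x_j)=0$, and therefore $x_j=\big(\sum_{i\neq j}b_ix_i\big)/\big(\sum_{i\neq j}b_i\big)$ is a convex combination of $\{x_i\}_{i\neq j}$ with strictly positive weights; in particular $\min_{i\neq j}x_i\leq x_j\leq\max_{i\neq j}x_i$. But equation (1) on its own says $b_j=\sum_{i\neq j}b_i$, and since $n\geq3$ (a cycle in a simple graph has at least three edges) this sum has at least two terms, each $\geq1$, so $b_j>b_i$ strictly for every $i\neq j$. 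Because $b(x)=(1-x^2)^{-1/2}$ is a strictly increasing function of $|x|$, it follows that $|x_j|>|x_i|$ for all $i\neq j$, so $x_j$ is the strictly extreme entry among the $x_i$: strictly larger than every other if $x_j\geq0$, strictly smaller if $x_j<0$. This contradicts the convex-hull bound obtained from (1) and (2) together, and the contradiction proves that $\A_j$ has no critical point in the interior of $\D$.

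I expect the genuine content to be exactly this tension — ``$x_j$ is an interior weighted average of the remaining $x_i$'' (from (1) and (2) combined) against ``$x_j$ is the extreme $x_i$'' (from (1) alone) — and it is the only place where the evenness and monotonicity of $\arcsin'$ are used; spotting the right elimination between the two gradient equations is the main obstacle. The remainder is routine bookkeeping: confirming that the interior of $\D$ is precisely the locus where all $x_i\in(-1,1)$; observing that ties among the $\P{i}$ cause no trouble, since the strict inequality $b_j>b_i$ survives; and disposing of the degenerate case $x_j=0$, where (1) already reads $1=\sum_{i\neq j}b_i\geq n-1\geq2$, an immediate impossibility.
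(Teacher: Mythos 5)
Your proof is correct and takes essentially the same route as the paper: the same explicit gradient, with $\partial_\eps\A_j=0$ forcing $|c_j|$ to be strictly smaller than every other $|c_k|$ (your $b_j>b_i$, i.e.\ $x_j$ strictly extremal), which is then shown to be incompatible with the vanishing of the $K$-derivative. The paper phrases that last step as a case split --- the extremality identifies $j$ with $j_0$ or $j_1$, and then $\partial_K\A_j=\sum_{k\neq j}(P_{j,j+1}^*-P_{k,k+1}^*)/(K^2c_k)$ is sign-definite and nonzero --- whereas your convex-combination formulation treats all $j$ uniformly; this is a repackaging of the same inequality rather than a different method.
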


\begin{proof}
For $j\in\{1,...,n\}$, we have
\begin{align*}
 \frac{\partial\A_j}{\partial\eps}&=\sum_kc_k^{-1}\, .
\end{align*}
Assume first that at some point $(K,\eps)$ in the interior of $\D$, ${\partial\A_j}/{\partial\eps}=0$, then
\begin{align}\label{dAdepseg0}
 \frac{\partial\A_j}{\partial\eps}=0&\iff\sum_kc_k^{-1}=0\iff\sum_{k\neq j}c_k^{-1}=-c_j^{-1}\iff\sum_{k\neq j}-{c_j}/{c_k}=1\, .
\end{align}
Recall that as we chose 
\begin{align*}
 \Delta_{j,j+1}=\pi-\arcsin\left(\eps+{P_{j,j+1}^*}/{K}\right)\, ,
\end{align*}
we have $c_j<0$. It is then easy to check that for any $k\neq j$
\begin{align*}
 0<-{c_j}/{c_k}<1&\implies0<-c_j<c_k\\
 &\implies \sqrt{1-\left(\eps+{\P{j}}/{K}\right)^2}<\sqrt{1-\left(\eps+{\P{k}}/{K}\right)^2}\\
 &\implies \left(\eps+{\P{j}}/{K}\right)^2>\left(\eps+{\P{k}}/{K}\right)^2\, .
\end{align*}
There are now two possible cases :
\begin{enumerate}
 \item if $\eps+{\P{j}}/{K}>0$, then
 \begin{align*}
  \eps+{\P{j}}/{K}>\eps+{\P{k}}/{K}&\iff \P{j}>\P{k}\, ,\qquad\forall k\neq j\\
  &\implies \P{j}=P_{\max}^*\, ;
 \end{align*}
 \item if $\eps+{\P{j}}/{K}<0$, then
 \begin{align*}
  \eps+{\P{j}}/{K}<\eps+{\P{k}}/{K}&\iff \P{j}<\P{k}\, ,\qquad\forall k\neq j\\
  &\implies \P{j}=P_{\min}^*\, .
 \end{align*}
\end{enumerate}
Thus if $j\notin\{j_0,j_1\}$, Eq.~(\ref{dAdepseg0}) cannot hold and $\A_j$ has no critical point in $\D$. Let now $j\in\{j_0,j_1\}$ and assume that 
\begin{align*}
 \frac{\partial\A_j}{\partial\eps}&=\sum_kc_k^{-1}=0\, .
\end{align*}
We calculate
\begin{align*}
 \frac{\partial\A_j}{\partial K}&=\sum_{k\neq j}\frac{\partial}{\partial K}\arcsin\left(\eps+{\P{k}}/{K}\right)
 +\frac{\partial}{\partial K}\left[\pi-\arcsin\left(\eps+{\P{j}}/{K}\right)\right]\\
 &=-\sum_{k\neq j}c_k^{-1}{\P{k}}/{K^2}-c_j^{-1}{\P{j}}/{K^2}\\
 &=-\sum_{k\neq j}c_k^{-1}{\P{k}}/{K^2}+\sum_{k\neq j}c_k^{-1}{\P{j}}/{K^2}\\
 &=\sum_{k\neq j}\left({\P{j}-\P{k}}\right)/\left({K^2c_k}\right)\, ,
\end{align*}
which is non zero as every term is non-negative (resp. non-positive) if $j=j_1$ (resp. $j=j_0$). 
Thus the partial derivatives of $\A_j$ are never simultaneously zero implying that 
$\A_j$ has no critical point in the domain $\D$. 
\end{proof}

\begin{cor}\label{shape}
 For any $j\in\{1,...,n\}$, the level sets of $\A_j$ are continuous lines that: i) cannot end in the interior of $\D$, ii)
 are not closed and iii) have no trifurcation.
\end{cor}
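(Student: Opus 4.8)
The argument will be a soft, topological one: Corollary~\ref{shape} is essentially the statement that the level sets of $\A_j$ are \emph{regular} level sets of a submersion, so the plan is to install that framework and then simply read off the three properties. First I would record that on the interior of $\D$ all the arguments $\eps+\P{i}/K$ lie strictly inside $(-1,1)$ (because $P_{\min}^*\le\P{i}\le P_{\max}^*$), so $\A_j$ is smooth there, and that by Lemma~\ref{blabla2} its gradient never vanishes. Hence every value of $\A_j$ is a regular value and, by the Implicit Function Theorem~\cite{Che01}, each level set $\A_j^{-1}(c)$ is, in a neighbourhood of any one of its points, the graph of a smooth function of one of the two coordinates; globally it is therefore a one-dimensional embedded submanifold of $\mathrm{int}\,\D$ \emph{without boundary}, i.e.\ a disjoint union of smooth arcs and smooth closed curves. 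That these are ``continuous lines'' is then part of the conclusion.

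From this structural description, properties i) and iii) follow directly, and I would dispatch them first. If a level curve ended at an interior point $p\in\mathrm{int}\,\D$, the level set would be homeomorphic to a half-open interval near $p$, which a boundaryless $1$-manifold cannot be; so arcs can only ``escape'' towards $\partial\D$ (possibly as $K\to\infty$) and cannot terminate inside $\D$. Similarly, at a trifurcation the level set would look locally like three arcs issuing from one point, which is not homeomorphic to an open interval, contradicting the manifold property; and since the embedding also excludes transverse self-crossings, no branching of any kind can occur.

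The one point requiring a genuinely separate argument — and the main obstacle — is ii), since ``no closed components'' is not a purely local statement. Here the plan is: (a) observe that $\mathrm{int}\,\D=\{(K,\eps):\eps_{\min}(K)<\eps<\eps_{\max}(K)\}$ over the interval of $K$ for which $\eps_{\min}(K)<\eps_{\max}(K)$, with $\eps_{\min},\eps_{\max}$ continuous, so $\mathrm{int}\,\D$ is homeomorphic to a product of two intervals, hence simply connected; (b) if some component $\gamma$ of $\A_j^{-1}(c)$ were a closed curve, then by the Jordan curve theorem together with the simple connectedness of $\mathrm{int}\,\D$, $\gamma$ would bound a compact region $R\subset\mathrm{int}\,\D$ with $\partial R=\gamma$; (c) the continuous function $\A_j$ attains its maximum and minimum on the compact set $R$, and since $\A_j\equiv c$ on $\partial R=\gamma$, either one of these extrema is attained at an interior point of $R$ — necessarily a critical point of $\A_j$ — or $\A_j\equiv c$ on all of $R$, so $\nabla\A_j\equiv 0$ there; both alternatives contradict Lemma~\ref{blabla2}. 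The only subtlety worth double-checking is step (b): one must make sure the region enclosed by $\gamma$ really lies inside $\D$, which is precisely what the simple connectedness (indeed contractibility) of $\mathrm{int}\,\D$ guarantees. Everything else is routine once Lemma~\ref{blabla2} is in hand.
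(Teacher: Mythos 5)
Your proof is correct and follows essentially the same route as the paper: the paper's entire proof is the one-line observation that any of the three situations would force a critical point of $\A_j$, contradicting Lemma~\ref{blabla2}. You merely supply the details left implicit — the Implicit Function Theorem structure for i) and iii), and for ii) the Jordan-curve/extremum argument relying on the contractibility of the interior of $\D$ — and these details are sound.
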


\begin{proof}
Any of these situations would imply at least one critical point.
\end{proof}

\begin{lem}\label{finally}
 Let $L\in\mathbb{R}$. If there exist $K_0\in\mathbb{R}$ such that $\A_{j_1}(K_0,\eps_{\max}(K_0))=L$, then there is a single level curve of $\A_{j_1}=L$ starting at 
 $(K_0,\eps_{\max}(K_0))$. The same holds for level curves of $\A_{j_0}=L$ starting at $(K_0,\eps_{\min}(K_0))$.
\end{lem}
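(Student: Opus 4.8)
The plan is local-to-global. First I would show that, in a neighbourhood of the boundary point $(K_0,\eps_{\max}(K_0))\in\D_1$, the level set $\{\A_{j_1}=L\}$ is a single arc having that point as one endpoint, and then invoke Lemma~\ref{blabla2} and Corollary~\ref{shape} to prolong it uniquely to a full level curve. The obstruction — and the real content of the lemma — is that $(K_0,\eps_{\max}(K_0))$ lies on $\D_1$, where $\A_{j_1}$ fails to be differentiable: in the term $\pi-\arcsin(\eps+\P{j_1}/K)$ one has $c_{j_1}\to0$, so both partial derivatives of $\A_{j_1}$ blow up and the Implicit Function Theorem cannot be applied to $\A_{j_1}$ directly, in contrast with the interior argument of Corollary~\ref{shape}.

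To get around this I would blow up the boundary. Writing $\delta\coloneqq\eps_{\max}(K)-\eps\geq0$, the part of a neighbourhood of $(K_0,\eps_{\max}(K_0))$ lying in $\D$ corresponds to $\delta\geq0$, and since $\P{j_1}=P_{\max}^*$ the argument of the $j_1^{\text{th}}$ arcsine equals exactly $1-\delta$, independently of $K$. Using $\pi-\arcsin(1-\delta)=\pi/2+\arccos(1-\delta)$ together with $\arccos(1-t^2)=\sqrt2\,t+O(t^3)$ on $t\geq0$, one checks that $H(K,t)\coloneqq\A_{j_1}(K,\eps_{\max}(K)-t^2)$ is continuous on $\{t\geq0\}$, smooth for $t>0$, and has $H(K,0)=\A_{j_1}(K,\eps_{\max}(K))=\A_0(K,\eps_{\max}(K))$, which I abbreviate $g(K)$; moreover $\partial_tH(K,t)$ has constant sign on a punctured half-neighbourhood of $t=0$, so $t\mapsto H(K,t)$ is strictly monotone there. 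When several lines carry the maximal reference flow $P_{\max}^*$ a short check is needed: the contributions of those lines to $\partial_tH$ partly cancel, but strict monotonicity in $t$ persists for small $t>0$, being carried in the borderline case by the remaining non‑maximal terms, of which there is at least one because $P_{\min}^*\neq P_{\max}^*$.

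With this normal form I would combine the monotonicity of $H$ in $t$ with that of $g$. By Proposition~\ref{prop:A0(K,eps_min) monotonous with K}, $g'(K_0)>0$ — the relevant sum runs over the lines with $\P{i}<P_{\max}^*$, which is nonempty since $P_{\min}^*\neq P_{\max}^*$ — so $g(K)=L$ only at $K=K_0$, and on exactly one side of $K_0$ the equation $H(K,t)=L=g(K_0)$ has a unique solution $t=\tau(K)\geq0$ with $\tau(K_0)=0$, while on the other side it has none. Translating back, $\{\A_{j_1}=L\}$ in a neighbourhood of $(K_0,\eps_{\max}(K_0))$ within $\D$ is precisely the single arc $K\mapsto(K,\eps_{\max}(K)-\tau(K)^2)$, which ends at $(K_0,\eps_{\max}(K_0))$ on $\D_1$. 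By Corollary~\ref{shape} this arc prolongs to a level curve of $\A_{j_1}=L$ that has no endpoint in the interior of $\D$, is not closed, and has no trifurcation; hence it is the unique level curve of $\A_{j_1}=L$ starting at $(K_0,\eps_{\max}(K_0))$. The statement for $\A_{j_0}$ at $(K_0,\eps_{\min}(K_0))\in\D_0$ follows verbatim from the symmetric substitution $\delta\coloneqq\eps-\eps_{\min}(K)\geq0$, which makes the argument of the $j_0^{\text{th}}$ arcsine equal to $-1+\delta$ and reduces the analysis to the same blow-up.
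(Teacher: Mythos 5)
Your proposal is correct, but it takes a genuinely different route from the paper. The paper works with the $K$-derivative: from Eq.~\eqref{partialK}, close to $\D_1$ the divergent $j_1$-term dominates, so $\partial\A_{j_1}/\partial K$ keeps the (non-zero) sign of $P_{\max}^*$ throughout a one-sided neighborhood $U$ of $(K_0,\eps_{\max}(K_0))$, and two level curves at the same level emanating from that point would force this derivative to change sign inside $U$, a contradiction. You instead blow up the boundary via $\eps=\eps_{\max}(K)-t^2$, prove strict monotonicity of $\A_{j_1}$ in the transverse variable $t$ near $t=0$, and combine it with strict monotonicity of the boundary trace $g(K)=\A_0(K,\eps_{\max}(K))$ (a strengthening of Proposition~\ref{prop:A0(K,eps_min) monotonous with K} which you justify correctly), obtaining an explicit local normal form: the level set near the point is a single one-sided arc ending on $\D_1$. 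This buys two things the paper's argument does not give: you never use $P_{\max}^*\neq 0$ (the paper's shift of the reference flow is needed precisely for its sign argument), relying only on $P_{\min}^*\neq P_{\max}^*$; and you cover the degenerate case of several lines carrying $P_{\max}^*$, where the divergent contributions to $\partial\A_{j_1}/\partial K$ cancel and the paper's dominance argument actually breaks down (the paper disposes of that configuration only later, in the Remark after Theorem~\ref{mainThm}, through stability), whereas your cancellation check keeps strict monotonicity in $t$ thanks to the remaining non-maximal terms. The paper's proof is shorter and avoids the change of variables. Two minor points, at the same level of rigor as the paper's own: the sign of $\partial_t H$ should be made uniform in $K$ over the small interval around $K_0$ (as the paper does with $\xi^*$), and both arguments implicitly stay away from the corner $K_c=(P_{\max}^*-P_{\min}^*)/2$ where $\D_0$ and $\D_1$ meet and a second arcsine degenerates. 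Note finally that your closing appeal to Corollary~\ref{shape} is not needed for the uniqueness assertion itself, which is already local at the boundary point.
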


\begin{remark}
 This lemma means that the red curve in Fig.~\ref{intervals} is unique.
\end{remark}
\begin{figure}[b]
 \begin{center}
  \includegraphics[width=250px]{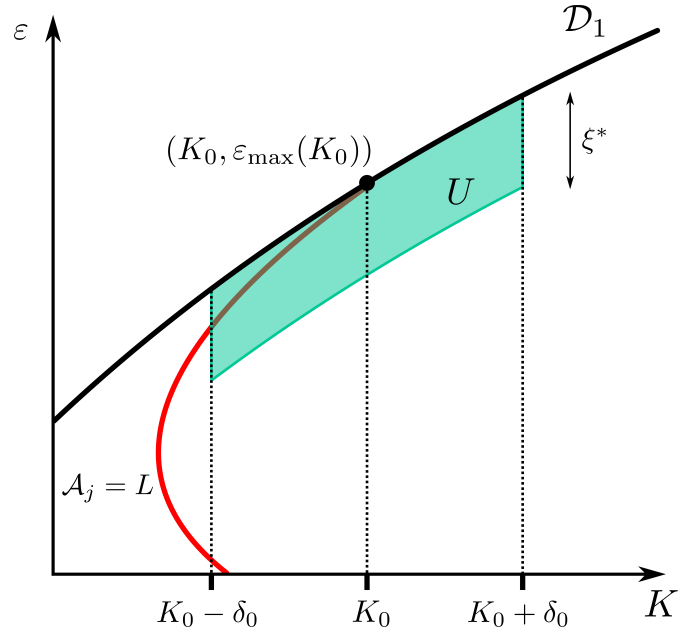}
  \caption{\textit{Sketch of the level curve $\A_j=0$ (red curve) and of the neighborhood $U$ of $(K_0,\eps_{\max}(K_0))$ 
  where the partial derivative of $\A_j$ with respect to $K$ always has the same non-zero sign. The neighborhood $U$ is bounded above 
  by the boundary of the domain $\D$, left and right by the bounds of the interval $I_0$ and is of height $\xi^*$.}}
  \label{intervals}
 \end{center}
\end{figure}

\begin{proof}
We prove the first statement, the proof of the second one being similar. We first recall that
\begin{align}\label{partialK}
 \frac{\partial\A_{j_1}}{\partial K}&=\left[1-\left(\eps+\frac{P_{\max}^*}{K}\right)^2\right]^{-\frac{1}{2}}\frac{P_{\max}^*}{K^2}
 -\sum_{k\neq j_1}\left[1-\left(\eps+\frac{\P{k}}{K}\right)^2\right]^{-\frac{1}{2}}\frac{\P{k}}{K^2}\, .
\end{align}
Consider now a small interval around $K_0$, $I_0=\big[K_0-\delta_0,K_0+\delta_0\big]$. 
For any $K\in I_0$, there is a $\xi_K>0$ such that for all 
$\eps\in\big(\eps_{\max}(K)-\xi_K,\eps_{\max}(K)\big)$, the first term in the right-hand-side of Eq.~(\ref{partialK}) dominates, and the partial derivative 
${\partial\A_{j_1}}/{\partial K}$ has the same sign as $P_{\max}^*$, which we chose non-zero above (see discussion below Eq.~\eqref{emoins}). Thus setting 
\begin{align*}
 \xi^*&\coloneqq\min_{K\in I_0}\{\xi_K\}\, ,
\end{align*}
allows us to define a neighborhood $U\subset\D$ of $(K_0,\eps_{\max}(K_0))$,
\begin{align*}
 U&\coloneqq\left\{(K,\eps)\colon |K-K_0|<\delta_0~,~0\leq\eps_{\max}(K)-\eps<\xi^*\right\}\, ,
\end{align*}
where the partial derivative ${\partial\A_{j_1}}/{\partial K}$ always has the same non-zero sign. 
This neighborhood is sketched in Fig.~\ref{intervals}. But if there is more than one level curve starting at 
$(K_0,\eps_{\max}(K_0))$ and since $\A_{j_1}$ is not constant, this partial derivative has to change sign in any neighborhood of $(K_0,\eps_{\max}(K_0))$, 
which leads to a contradiction. There is therefore at most one such level curve.
\end{proof}

\paragraph*{}
We next investigate how the linear stability of the solutions to the power flow Eq.~(\ref{PFsimple}) varies along the level curves $\A_j=2\pi q$. 
The two following lemmas show that the only functions leading to stable solutions are $\A_0$, $\A_{j_0}$ and $\A_{j_1}$.

\begin{lem}\label{bla}
 For any choice of $a_i$'s, the stability matrix $M$ has a second null eigenvalue $\lambda_2$ (see Section~\ref{stability}) if and only if
 \begin{align*}
  \sum_kc_k^{-1}=0\, .
 \end{align*}
\end{lem}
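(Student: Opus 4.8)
The plan is to exploit the special structure of the stability matrix $M$ on a cycle graph. On a single loop with nodes indexed cyclically, $M$ is tridiagonal-plus-corners: $M_{i,i+1}=M_{i+1,i}=Kc_i$, $M_{ii}=-K(c_{i-1}+c_i)$, with all indices modulo $n$. I already know $\lambda_1=0$ is an eigenvalue with the constant eigenvector $\bm{1}=(1,\dots,1)^\top$, coming from gauge invariance. The question is exactly when a \emph{second} eigenvalue vanishes, i.e.\ when $\dim\ker M\geq 2$. Since $M$ is the (weighted, signed) Laplacian-type matrix of a single cycle, its kernel is governed by the one independent cycle of the graph, so I expect $\dim\ker M\in\{1,2\}$ generically, with the jump to $2$ controlled by a single scalar condition — which the statement tells us is $\sum_k c_k^{-1}=0$.

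The concrete approach: look for a nonconstant vector $\bm{v}=(v_1,\dots,v_n)$ with $M\bm{v}=0$. Writing the $i$-th equation $c_{i-1}(v_{i-1}-v_i)+c_i(v_{i+1}-v_i)=0$ in terms of the ``discrete flux'' $f_i\coloneqq c_i(v_{i+1}-v_i)$, the equations become $f_i=f_{i-1}$ for every $i$, so $f_i\equiv f$ is constant around the loop. Hence $v_{i+1}-v_i=f/c_i$, and summing these increments once around the cycle gives the single-valuedness (periodicity) constraint
\begin{align*}
 0=\sum_{i=1}^n (v_{i+1}-v_i)=f\sum_{i=1}^n c_i^{-1}\, .
\end{align*}
If $\sum_k c_k^{-1}\neq 0$, this forces $f=0$, hence $\bm{v}$ constant, so $\ker M$ is one-dimensional and $\lambda_2\neq 0$. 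If $\sum_k c_k^{-1}=0$, then any $f\neq 0$ yields a genuinely nonconstant $\bm{v}\in\ker M$ (well-defined because the increments $f/c_i$ are finite: in the interior of $\D$ the $c_i$ are nonzero, as noted after Eq.~\eqref{eq:D}), so $\dim\ker M=2$ and $\lambda_2=0$. Conversely, if $\lambda_2=0$ there is a nonconstant kernel vector, which runs the argument backwards to give $f\neq 0$ and therefore $\sum_k c_k^{-1}=0$. This establishes the ``if and only if.''

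The only subtlety — and the step I would be most careful about — is the degenerate case where some $c_i$ vanish (a point on $\D_0$ or $\D_1$, or $\eps+P^*_{i,i+1}/K=\pm1$), where the substitution $f_i=c_i(v_{i+1}-v_i)$ and the quantity $c_i^{-1}$ are ill-defined; but the lemma is implicitly being applied in the interior of $\D$ where all $c_k\neq 0$, so this is handled by that restriction. A second point worth a line is that $M$ being real symmetric means ``$\lambda_2=0$'' is equivalent to ``$\dim\ker M\geq 2$,'' so the eigenvalue statement and the kernel-dimension statement coincide; no Jordan-block issues arise. With these two remarks in place the proof is essentially the flux computation above.
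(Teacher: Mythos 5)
Your proof is correct, but it takes a genuinely different route from the paper. You characterize the kernel of $M$ directly: writing the $i$-th equation of $M\bm{v}=0$ as $c_{i-1}(v_{i-1}-v_i)+c_i(v_{i+1}-v_i)=0$ and introducing the discrete flux $f_i=c_i(v_{i+1}-v_i)$, you get $f_i\equiv f$ constant around the loop, and the single-valuedness constraint $0=f\sum_k c_k^{-1}$ then yields the equivalence $\dim\ker M\geq 2\iff\sum_k c_k^{-1}=0$; together with your (correct) remark that symmetry of $M$ makes "second null eigenvalue" the same as $\dim\ker M\geq 2$, this proves the lemma. The paper instead works with the characteristic polynomial: it factors $\chi(M)=(-\lambda)Q(\lambda)$, reduces $Q(0)$ to the cofactor $\det([\tilde M(0)]_{11})$, and evaluates that determinant through an $LU$ (Thomas) factorization plus an induction in Appendix~\ref{induction}, arriving at $Q(0)=(-1)^{n-1}n\prod_j c_j\sum_k c_k^{-1}$. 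Your argument is shorter and more structural: it avoids the cofactor manipulations and the appendix induction, and it exhibits the extra zero mode explicitly as a circulating "loop flow" of the linearized dynamics, nicely echoing the paper's main theme. What the paper's computation buys in exchange is the explicit closed form for $Q(0)$ (essentially the product of the nonzero eigenvalues up to sign), which carries sign information beyond the mere vanishing condition; the lemma as stated does not need it, so your proof fully suffices. You also correctly flag the only genuine hypothesis, namely that all $c_k\neq 0$, i.e.\ that one works in the interior of $\D$ — the same restriction under which the paper's final equality in Eq.~(\ref{det_minor}) holds — and your converse direction (nonconstant kernel vector forces $f\neq 0$ because all $c_k\neq0$) is sound.
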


\begin{proof}
Consider the charateristic polynomial of the stability matrix $M$,
\begin{align*}
 \chi(M)&=\left|
 \begin{matrix}
  -c_1-c_n-\lambda&c_1&\cdots&0&\cdots&c_n\\
  c_1&-c_1-c_2-\lambda&c_2&&&\vdots\\
  \vdots&c_2&\ddots&\ddots&&0\\
  0&&\ddots&&&\vdots\\
  \vdots&&&&&c_{n-1}\\
  c_n&\cdots&0&\cdots&c_{n-1}&-c_{n-1}-c_n-\lambda
 \end{matrix}
 \right|\, .
\end{align*}
Adding all rows to the first one it can be written $\chi(M)=\det\left(\tilde{M}(\lambda)\right)$ with
\begin{align*}
 \tilde{M}(\lambda)&=
 \begin{pmatrix}
  -\lambda&\cdots&&&\cdots&-\lambda\\
  c_1&-c_1-c_2-\lambda&c_2&&&\vdots\\
  \vdots&c_2&\ddots&\ddots&&0\\
  0&&\ddots&&&\vdots\\
  \vdots&&&&&c_{n-1}\\
  c_n&\cdots&0&\cdots&c_{n-1}&-c_{n-1}-c_n-\lambda
 \end{pmatrix}\, .
\end{align*}
Expanding the determinant along the first row we obtain
\begin{align}
\begin{split}\label{sum_minors}
 \chi(M)&=(-\lambda)\sum_{i=1}^n(-1)^{i-1}\det\left([\tilde{M}(\lambda)]_{1i}\right)\\
 &\eqqcolon(-\lambda) Q(\lambda)\, ,
\end{split}
\end{align}
where $\det\left(\left[A\right]_{ij}\right)$ stands for the $(i,j)$-cofactor of $A$. One eigenvalue obviously vanishes 
and a second eigenvalue, $\lambda_2$, is zero if and only if $Q(0)=0$.

\paragraph*{}
We show now that for $i=2,...,n$,
\begin{align*}
 \det\left([\tilde{M}(0)]_{1,i}\right)&=-\det\left([\tilde{M}(0)]_{1,i-1}\right)\, .
\end{align*}
Let $\C_k$ denote the $k^{\text{th}}$ column of matrix $\tilde{M}(0)$ with the first row removed. We write
\begin{align*}
 \det\left([\tilde{M}(0)]_{1i}\right)&=\Bigg|\C_1~\cdots~\C_{i-2}\quad\C_{i-1}\quad\C_{i+1}~\cdots~\C_n\Bigg|\\
 &=\left|\C_1\quad\cdots\quad\C_{i-2}\quad\sum_{j\neq i}\C_j\quad\C_{i+1}\quad\cdots\quad\C_n\right|\\
 &=\Bigg|\C_1~\cdots~\C_{i-2}\quad-\C_i\quad\C_{i+1}~\cdots~\C_n\Bigg|\\
 &=-\det\left([\tilde{M}(0)]_{1,i-1}\right)\, ,
\end{align*}
where at the second line we used that the determinant is not changed by adding a linear combination of columns to any column, and at the third line, we used the fact that 
the sum of the elements of any row is zero. We conclude that
\begin{align}\label{minor2minor}
 \det\left([\tilde{M}(0)]_{1i}\right)=(-1)^{i-1}\det\left([\tilde{M}(0)]_{11}\right)\, .
\end{align}

\paragraph*{}
Thus to calculate $Q(0)$ we only have to compute $\det\left([\tilde{M}(0)]_{11}\right)$. 
Since $[\tilde{M}(0)]_{11}$ is tridiagonal we compute its $LU$-factorization using Thomas algorithm~\cite{Gan14},
\begin{align*}
 [\tilde{M}(0)]_{11}&=\left(
 \begin{matrix}
  -c_1-c_2&c_2&&\\
  c_2&-c_2-c_3&\ddots&\\
  &\ddots&\ddots&c_{n-1}\\
  &&c_{n-1}&-c_{n-1}-c_{n}
 \end{matrix}
 \right)\\ 
 &=
 \begin{pmatrix}
  \phantom{c_2/}1&&0&\\
  c_2/\beta_1&\phantom{c_2/\beta_2}1&&\\
  &\ddots&\ddots&\\
  \phantom{c_2/}0&&\displaystyle c_{n-1}/\beta_{n-2}&1
 \end{pmatrix}
 \cdot
 \begin{pmatrix}
  \beta_{1\phantom{-1}}&c_{2\phantom{-1}}&\phantom{\ddots}0&\\
  &\beta_{2\phantom{-1}}&\ddots\quad&\\
  &&\ddots\quad&c_{n-1}\\
  &0\phantom{\ddots}&&\beta_{n-1}
 \end{pmatrix}\, ,\\
 \intertext{where}
 \beta_i&\coloneqq\left\{
 \begin{array}{ll}
  -(c_1+c_2)\, ,&\text{if }i=1\ ,\\
  -(c_i+c_{i+1}+{c_i^2}/{\beta_{i-1}})\, ,&\text{if }i\neq1\, .
 \end{array}
 \right.
\end{align*}
This factorization is only valid for non-singular matrices, but by continuity it can be computed arbitrarily close to points where the determinant vanishes. 
Computing the determinant of the matrix $[\tilde{M}(0)]_{11}$ then reduces to computing the product of the $\beta_i$'s. 

\paragraph*{}
Let us define
\begin{align}\label{eq:Definition mu}
 \mu_i&\coloneqq\left\{
 \begin{array}{ll}
  1\, ,&\text{if }i=0\, ,\\
  \displaystyle\sumcos{i+1}\, ,&\text{if }i=1,...,n\, .
 \end{array}\right.
\end{align}
In Appendix~\ref{induction}, we prove by induction that $\mu_{i-1}\cdot\beta_i=-\mu_i$, for all $i=1,...,n$. 
This allows to compute the determinant of $[\tilde{M}(0)]_{11}$,
\begin{align}
 \det\left([\tilde{M}(0)]_{11}\right)&=\prod_{i=1}^{n-1}\beta_i
 =(-1)^{n-1}\frac{\mu_{n-1}}{\mu_0}
 =(-1)^{n-1}\sumcos{n}
 =(-1)^{n-1}\prod_{j=1}^nc_j^\iPower\cdot\sum_{k=1}^nc_k^{-1}\, ,\label{det_minor}
\end{align}
where the last equality holds as long as all $c_k$'s are nonzero, which is true in the interior of $\D$.

\paragraph*{}
Finally, combining Eqs.~(\ref{sum_minors}), (\ref{minor2minor}) and (\ref{det_minor}) we have
\begin{align}
 Q(0)&=(-1)^{n-1}n\prod_{j=1}^nc_j\sum_{k=1}^nc_k^{-1}\,,\nonumber\\
\intertext{and thus}
 \lambda_2=0&\iff Q(0)=0\iff\sum_{k=1}^nc_k^{-1}=0\, .\label{l2IsZ}
\end{align}
\end{proof}

\begin{lem}\label{blabla}
 For a given value $L\in\mathbb{R}$ and $j\in\{j_0,j_1\}$, there is at most one point where $\sum_kc_k^{-1}=0$ along a connected 
 component of the level curve $\A_j=L$.
\end{lem}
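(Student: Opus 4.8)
The plan is to reduce the statement to the elementary fact that a smooth function of one real variable whose derivative is strictly positive at each of its zeros has at most one zero. Two preliminary observations make this possible. First, by the derivative computation in the proof of Lemma~\ref{blabla2}, the quantity $\sum_kc_k^{-1}$ is exactly $\partial\A_j/\partial\eps$, so the points to be counted are the points of the level curve $\A_j=L$ at which the curve has a vertical tangent (by Lemma~\ref{bla} these are precisely the solutions on the curve at which the threshold eigenvalue $\lambda_2$ vanishes). Second, by Lemma~\ref{blabla2} the function $\A_j$ has no critical point in the interior of $\D$, so its gradient is nowhere zero; hence by Corollary~\ref{shape} a connected component $\Gamma$ of $\{\A_j=L\}$ is a smooth, non-closed curve, which I parametrise by a single arc-length parameter $s$ on an interval, orienting it by the continuous nonvanishing tangent field $t(s)\propto(\partial_\eps\A_j,-\partial_K\A_j)$. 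Writing $f(s):=\sum_kc_k^{-1}$ along $\Gamma$, the lemma becomes the assertion that $f$ has at most one zero.

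Next I differentiate $f$ along $\Gamma$. Since $f=\partial_\eps\A_j$ vanishes at a zero, only the $\eps$-component of the tangent survives there, and a one-line chain-rule computation gives, at such a point,
\[
 f'(s)=-\frac{\partial_K\A_j\cdot\partial_\eps f}{|\nabla\A_j|}\,.
\]
So it is enough to show that $\partial_K\A_j\cdot\partial_\eps f$ is strictly negative at every zero of $f$ on $\Gamma$, for then $f'(s)>0$ there, and $f$ cannot vanish twice. The sign of $\partial_K\A_j$ at a zero of $f$ is already recorded in the proof of Lemma~\ref{blabla2}: for $j=j_1$ one has $\partial_K\A_{j_1}=\sum_{k\neq j_1}(\P{j_1}-\P{k})/(K^2c_k)>0$ because $\P{j_1}=P_{\max}^*$ and $c_k>0$ for $k\neq j_1$, while for $j=j_0$ one gets $\partial_K\A_{j_0}<0$; moreover the same case analysis shows $\eps+P_{\max}^*/K>0$ when $j=j_1$ (resp. $\eps+P_{\min}^*/K<0$ when $j=j_0$).

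The heart of the proof is the sign of $\partial_\eps f=\sum_k u_k/c_k^3$, where $u_k:=\eps+\P{k}/K$. Setting $v_k:=-c_j/c_k>0$ for $k\neq j$, which satisfy $\sum_{k\neq j}v_k=1$ at a zero of $f$, one rewrites
\[
 \partial_\eps f=c_j^{-3}\left[\,u_j\Big(1-\sum_{k\neq j}v_k^3\Big)+\sum_{k\neq j}(u_j-u_k)v_k^3\,\right]\,.
\]
Here $c_j^{-3}<0$ by the choice of $\A_j$, and $1-\sum_{k\neq j}v_k^3\geq0$ because $0<v_k\leq1$ and $\sum_{k\neq j}v_k=1$. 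For $j=j_1$ we have $u_j=u_{\max}>0$ by the previous paragraph, so the first bracketed term is $\geq0$; the second is $\geq0$ because $u_j-u_k\geq0$, and it is in fact $>0$ since its $k=j_0$ contribution equals $(P_{\max}^*-P_{\min}^*)v_{j_0}^3/K>0$. Hence the bracket is strictly positive and $\partial_\eps f<0$, so $\partial_K\A_j\cdot\partial_\eps f<0$ and $f'(s)>0$. The case $j=j_0$ is identical with all inequalities reversed ($u_j<0$, $u_j-u_k\leq0$), again yielding $f'(s)>0$. Since $f$ is smooth on $\Gamma$ and $f'>0$ at each of its zeros, $f$ vanishes at most once on $\Gamma$, which is the claim.

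I expect the step requiring genuine work to be the sign of $\partial_\eps f$ at a zero of $f$: a naive termwise estimate fails because the $u_k$ need not all have the same sign, and what rescues it is precisely the hypothesis $j\in\{j_0,j_1\}$, which forces $u_j$ to be the largest (resp. smallest) of the $u_k$ and, at a zero of $f$, to have a definite sign. This is exactly why the lemma excludes $j\notin\{j_0,j_1\}$. A minor point to dispatch is the behaviour near $\partial\D$: all zeros of $f$ on $\Gamma$ lie in the interior of $\D$, since $f=\partial_\eps\A_j$ diverges as one approaches the portion of $\partial\D$ where the relevant cosine $c_{j_1}$ (resp. $c_{j_0}$) vanishes, so the differentiation above is valid at every zero.
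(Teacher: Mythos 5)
Your proof is correct and takes essentially the same route as the paper: identify $\sum_k c_k^{-1}$ with $\partial_\eps\A_j$, note that its zeros are the vertical-tangent points of the level curve, and show that at every such zero the derivative along the curve has a fixed sign because $j\in\{j_0,j_1\}$ makes all the differences $\P{j}-\P{k}$ single-signed, so the quantity cannot vanish twice. Your weighted rearrangement of $\partial_\eps\sum_k c_k^{-1}$ and the explicit orientation bookkeeping through the sign of $\partial_K\A_j$ are only presentational variants of the paper's direct substitution of the constraint $c_j^{-1}=-\sum_{k\neq j}c_k^{-1}$ into the same derivative.
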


\begin{remark}
From Lemma~\ref{blabla2} we already know that if $j\notin\{j_0,j_1\}$, $\sum_kc_k^{-1}$ is never zero.
\end{remark}

\begin{proof}
We already know that
\begin{align*}
 \sum_kc_k^{-1}=\frac{\partial\A_j}{\partial\eps}\, .
\end{align*}
Hence the sum $\sum_kc_k^{-1}$ equals zero if and only if the level curve of $\A_j$ is parallel to the $\eps$ axis. 
Let us now differentiate the sum $\sum_kc_k^{-1}$ with respect to $\eps$ at 
such a point, to see how it varies along the level curve of $\A_j$. 
Using the fact that $c_j^{-1}=-\sum_{k\neq j}c_k^{-1}$ we have
\begin{align*}
 \begin{array}{rcl}
 \displaystyle\frac{\partial}{\partial\eps}\sum_{k=1}^nc_k^{-1}&=&\displaystyle\sum_{k\neq j}\displaystyle\frac{\partial}{\partial\eps}\left[1-\left(\eps+{\P{k}}/{K}\right)^2\right]^{-\frac{1}{2}}
 -\displaystyle\frac{\partial}{\partial\eps}\left[1-\left(\eps+{\P{j}}/{K}\right)^2\right]^{-\frac{1}{2}}\\
 &=&\displaystyle\sum_{k\neq j}\left[1-\left(\eps+{\P{k}}/{K}\right)^2\right]^{-\frac{3}{2}}\left[1-\left(\eps+{\P{j}}/{K}\right)^2\right]^{-1}
 \end{array}
 \\
 \begin{array}{rcr}
 \phantom{\frac{\partial}{\partial\eps}\sum_kc_k^{-1}}&\phantom{=}&\times\left[{1+\left(\eps+{\P{j}}/{K}\right)\left(\eps+{\P{k}}/{K}\right)}\right]\left({\P{k}-\P{j}}\right)/{K}\, .
 \end{array}
\end{align*}
The only term in the last expression that is not necessarily positive is $\left(\P{k}-\P{j}\right)/{K}$. But if $j=j_0$ (resp. $j=j_1$), this term is always positive (resp. negative) for $k\neq j$, 
and consequently the whole sum is positive (resp. negative). Thus, following a connected component of the level curve $\A_j=L$,  whenever $\sum_kc_k^{-1}$ hits zero, its derivative always has the 
same sign, therefore, by continuity, it cannot cross zero more than once. This completes the proof.
\end{proof}

\paragraph*{}
The proof of Theorem~\ref{mainThm} finally relies on Taylor's Lemma 2.1~\cite{Tay12}, which we recall here.

\begin{lem}[Taylor~\cite{Tay12}]\label{lemTaylor}
 Let $\{\theta_i^{(0)}\}$ be any stable solution of the power flow Eq.~(\ref{PFsimple}) on any network. 
 Then for any non-empty node subset $S$,
 \begin{align*}
  \sum_{\substack{\langle ij\rangle\colon \\ i\in S,j\notin S}}\cos(\Delta_{ij}^{(0)})\geq0\, .
 \end{align*}
\end{lem}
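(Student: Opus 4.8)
The plan is to obtain the inequality directly from the negative semi-definiteness of the stability matrix $M$ of Eq.~\eqref{stabilityM}, evaluated on the characteristic vector of the cut $(S,\mathcal{V}_G\setminus S)$.

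First I would recall from Section~\ref{stability} that a phase-locked solution $\{\theta_i^{(0)}\}$ of Eq.~\eqref{PFsimple} is linearly stable precisely when $M(\{\theta_i^{(0)}\})$ is negative semi-definite, so that $x^\top M x\le0$ for every $x\in\mathbb{R}^n$. Then I would rewrite this quadratic form as a sum of edge contributions: by Eq.~\eqref{SwingLin}, $M$ is (minus) the weighted graph Laplacian with edge weights $w_{ij}\coloneqq K\cos(\theta_i^{(0)}-\theta_j^{(0)})=K\cos(\Delta_{ij}^{(0)})$, so one has the algebraic identity
\[
 x^\top M x=-\sum_{\langle ij\rangle\in E_G}K\cos(\Delta_{ij}^{(0)})\,(x_i-x_j)^2\, ,
\]
valid for all $x$ and for all signs of the cosines — it follows by expanding the right-hand side and matching the entries of $M$.

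Next I would specialize to $x=\mathbf{1}_S$, the indicator vector of $S$ (equal to $1$ on $S$ and to $0$ elsewhere). For an edge $\langle ij\rangle$ the factor $(x_i-x_j)^2$ equals $1$ exactly when one endpoint is in $S$ and the other is not — i.e. when $\langle ij\rangle$ crosses the cut — and equals $0$ otherwise. Hence
\[
 \mathbf{1}_S^\top M\,\mathbf{1}_S=-K\sum_{\substack{\langle ij\rangle\colon\\ i\in S,\,j\notin S}}\cos(\Delta_{ij}^{(0)})\, ,
\]
and negative semi-definiteness forces the left-hand side to be $\le0$; since $K>0$ this gives the claimed inequality. (With general nonnegative couplings $K_{ij}$ the same computation yields $\sum K_{ij}\cos\Delta_{ij}^{(0)}\ge0$, so the uniform-$K$ hypothesis is inessential.)

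There is no real obstacle here: the two points that require a little care are (i) reading the off-diagonal part of $M$ in Eq.~\eqref{stabilityM} as supported on the edge set $E_G$ — which is exactly what the linearized dynamics Eq.~\eqref{SwingLin} makes explicit — so that the Laplacian identity holds verbatim, and (ii) noting that this identity persists even though some weights $\cos(\Delta_{ij}^{(0)})$ may be negative, since it is an algebraic identity, not an inequality. The inequality itself is then immediate from plugging the cut vector into $x^\top M x\le0$.
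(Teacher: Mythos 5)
Your proof is correct. Note that the paper itself does not prove this statement at all — it simply imports it as Lemma 2.1 of Taylor~\cite{Tay12} — so your argument fills in a proof rather than paralleling one in the text. What you do is the standard (and essentially Taylor's) argument: since the linearization $\delta\dot{\theta}=M\,\delta\theta$ of Eq.~\eqref{SwingLin} has the symmetric matrix $M$ of Eq.~\eqref{stabilityM}, stability forces $x^\top M x\le 0$ for all $x$; the identity $x^\top M x=-\sum_{\langle ij\rangle\in E_G}K\cos(\Delta_{ij}^{(0)})(x_i-x_j)^2$ holds for arbitrary (possibly negative) cosines because it is purely algebraic, and evaluating it on the indicator vector $\mathbf{1}_S$ isolates exactly the cut edges, giving $\sum_{\langle ij\rangle: i\in S, j\notin S}\cos(\Delta_{ij}^{(0)})\ge 0$. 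Your two points of care are the right ones: the off-diagonal entries of $M$ are supported on $E_G$ (as the sum over $j\sim i$ in Eq.~\eqref{SwingLin} makes explicit), and only the implication ``stable $\Rightarrow$ $M$ negative semi-definite'' is needed, which is immediate for a symmetric $M$. Your closing remark that the argument goes through verbatim for nonuniform nonnegative couplings $K_{ij}$ is also accurate and shows the lemma in the generality in which the paper invokes it (``on any network'').
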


\paragraph*{}
In other words, if we can partition the nodes of the network in two sets $S$ and $S^c$, such that the sum of cosines of the angle differences on all the lines between these two sets 
is smaller than $0$, then the solution is unstable. In our case of a cycle network, if the angle differences on two lines are larger than ${\pi}/{2}$ or less than $-{\pi}/{2}$, removing 
these two lines splits the network in two parts, $S$ and $S^c$, such that
\begin{align*}
  \sum_{\substack{\langle ij\rangle\colon \\ i\in S,j\notin S}}\cos(\Delta_{ij}^{(0)})<0\, ,
\end{align*}
and the solution is unstable. We conclude that there is at most a single $|\Delta_{i,i+1}|>\pi/2$. We are now ready to prove Theorem~\ref{mainThm}.

\begin{remark}
 Instead of Taylor's Lemma 2.1, we could use the necessary condition for stability 
 of Ref.~\onlinecite{Do12}, that 
 if $\{\theta_i^{(0)}\}$ is a stable solution of the power flow Eq.~\eqref{PFsimple}, 
 then there exists a spanning tree $T$ of the network such that for all edges $e\in E_T$,
 \begin{align*}
  \cos(\Delta_e^{(0)})&\geq 0\, .
 \end{align*}
 Taylor's lemma seems to be slightly more general. As a matter of fact, 
 it is an easy exercise to construct an example of a weighted graph containing a positively weighted    
 spanning tree, but
 such that there exists a non-empty node subset $S$ with 
 \begin{align*}
  \sum_{\substack{\langle ij\rangle\colon \\ i\in S,j\notin S}}\cos(\Delta_{ij}^{(0)})<0\, ,
 \end{align*}
 In this case, Taylor's Lemma 2.1 implies instability, while Ref.~\onlinecite{Do12} does not.
\end{remark}

\begin{proof}[Proof of Theorem~\ref{mainThm}]
Since for any $K_0$, $\A_0(K_0,\eps)$ is an increasing function of $\eps$, we know that for any integer $q\in\big[-{n}/{4},{n}/{4}\big]$, the level 
set of $\A_0=2\pi q$ is a single level curve. Furthermore, any point on such a level curve corresponds to a stable solution of the 
power flow Eq.~(\ref{PFsimple}). Starting from large values of $K$ and following this level curve while decreasing $K$, 
Corollary~\ref{shape} implies that it meets the boundary of $\D$ at some point. Assume that it meets the upper boundary 
$\D_1$ at $X=(K^*,\eps^*)$ as shown on Fig.~\ref{pseudo_parabola} 
(the case of the lower boundary $\D_0$ is treated in the same way, interchanging $j_1$ and $j_0$ in what follows). 
We know that $\A_0(K^*,\eps^*)=\A_{j_1}(K^*,\eps^*)$. As $\A_{j_1}$ is monotonous on $\D_1$ and smooth in the interior 
of the domain $\D$, there is a level curve of $\A_{j_1}=2\pi q$ starting at $X$ (the red line in Fig.~\ref{pseudo_parabola}), 
and by Lemma~\ref{finally}, it is unique. Furthermore, at this point, the corresponding solution is stable.
\begin{figure}[b]
 \begin{center}
  \includegraphics[width=300px]{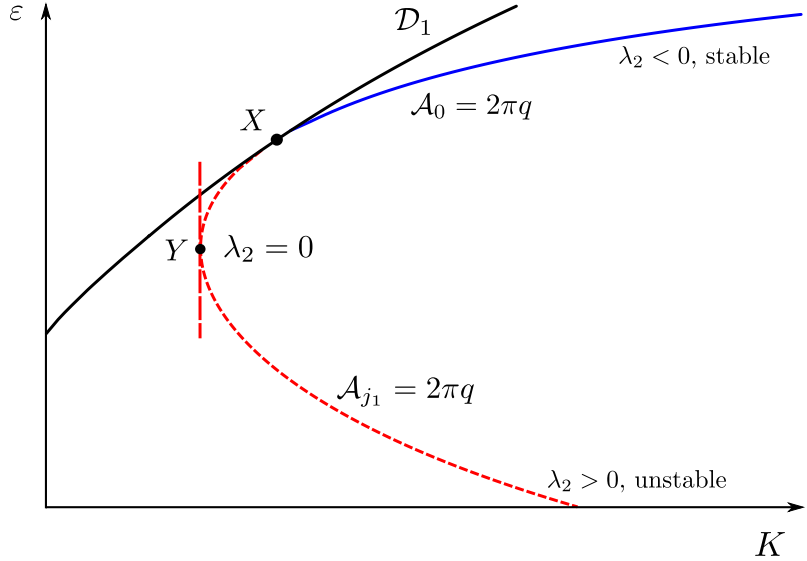}
  \caption{\textit{Level curves $\A_0(\eps,K)=2\pi q$ (blue line) and $\A_{j_1}(\eps,K)=2\pi q$ (dashed red line). Both functions are undefined above the boundary $\D_1$ (black line). 
  The corresponding solutions are stable 
  ($\lambda_2<0$) along the blue curve and between points $X$ and $Y$ on the dashed red curve, and unstable ($\lambda_2>0$) along the dashed red curve, from $Y$ to $K\to\infty$.}}
  \label{pseudo_parabola}
 \end{center}
\end{figure}

\paragraph*{}
According to Corollary~\ref{shape}, the level curve of $\A_{j_1}$ either meets the boundary of $\D$ or goes to $K\to\infty$. First, it cannot meet $\D_1$ because the value of $\A_{j_1}$ is 
strictly increasing with $K$ on $\D_1$ and the level curve cannot be closed by Corollary~\ref{shape}. Second, if it goes to $K\to\infty$, we know from Section~\ref{infK} that for $K$ 
large enough, the solution is unstable. Third, if it meets $\D_0$, Lemma~\ref{lemTaylor} implies that at this point, the corresponding solution is unstable, because at this point, 
$|\Delta_{j_1,j_1+1}|>\pi/2$ and $|\Delta_{j_0,j_0+1}|=\pi/2$. 
Thus along the level curve considered, the eigenvalue $\lambda_2$ has to change sign. 
Following Lemmas~\ref{bla} and \ref{blabla}, this happens only once, at point $Y$ shown on Fig.~\ref{pseudo_parabola} where the level curve changes direction with respect to $K$.

\paragraph*{}
Assume now that there is another connected component of the level set of $\A_{j_1}=2\pi q$. From Corollary~\ref{shape} it cannot be closed and by monotonicity of $\A_{j_1}$ 
along $\D_1$ and Lemma~\ref{finally}, it cannot meet the upper boundary $\D_1$. Thus the corresponding solutions are unstable at both ends of this level curve and as, by Lemmas~\ref{bla} 
and \ref{blabla}, $\lambda_2$ changes sign at most once along a level curve, then the corresponding solutions are unstable all along this level curve.

\paragraph*{}
We conclude that the number of $\eps$ values corresponding to stable solutions of the power flow equations increases with $K$, because a solution appears at point $Y$ 
and exists for any larger $K$.
\end{proof}
  
\begin{remark}
If there are two indices $i_1$ and $i_2$ such that $P_{i_1,i_1+1}^*=P_{i_2,i_2+1}^*=P_{\min}^*$ (the same works with $P_{\max}^*$), 
then $\sum_kc_k^{-1}>0$ for $a_{i_2}\in\big(-\pi,-{\pi}/{2}\big)\cup\big({\pi}/{2},\pi\big]$ and all other $a_i$'s in 
$\big[-{\pi}/{2},{\pi}/{2}\big]$, because $c_{i_1}=-c_{i_2}$ and then 
\begin{align*}
 \sum_kc_k^{-1}&=\sum_{k\neq i_1,i_2}c_k^{-1}>0\, .
\end{align*}
Hence, inside $\D$, $\lambda_2$ never changes sign along the level curves of $\A_{i_1}$ and $\A_{i_2}$.
This result together with the fact that for $K\rightarrow\infty$ the solutions corresponding to the level curves $\A_{i_1}$ and $\A_{i_2}$ 
are known to be unstable implies that such solutions remain unstable also for finite values of $K$.
Which implies that, in this case, no solution having one angle difference outside the interval $[-\pi/2,\pi/2]$
can be locally stable.
\end{remark}

\paragraph*{}
To summarize, we showed that while decreasing $K$, $\mathcal{N}(K)$ also decreases, and that any stable solution of Eq.~\eqref{PFsimple} for finite $K$ 
is a continuation of a solution for $K\to\infty$. 
We also showed that for finite $K$, stable solutions have at most one angle difference outside $[-\pi/2,\pi/2]$, and that such solutions are continuations of 
solutions with all angle differences in $[-\pi/2,\pi/2]$. 
Fig.~\ref{globalPicture} illustrates the whole situation. The domain $\D$ is bounded above by the curve $\D_1$ and below by $\D_0$.
The blue lines are the $2\pi q$-level curves of $\A_0$ for $q\in\{-1,0,1\}$, i.e. any point on a blue curve gives a pair of values $(K,\eps)$ corresponding to a 
stable solution of Eq.~\eqref{PFsimple} with all angle differenes in $[-\pi/2,\pi/2]$. 
The red dashed lines and the green dash-dotted line are the $2\pi q$-level curves of $\A_{j_1}$ and $\A_{j_0}$ respectively.
The points on the red dashed curves correspond to solutions (not necessarily stable) where the angle 
difference on the line carrying $P_{\max}^*$ is in $(-\pi,-\pi/2)\cup(\pi/2,\pi]$ and the points on the green dash-dotted curve correspond to solutions where the angle 
difference along the line carrying $P_{\min}^*$ is in $(-\pi,-\pi/2)\cup(\pi/2,\pi]$. Any blue line meets either a red dashed line on $\D_1$ 
(a zoom-in of this is depicted in Fig.~\ref{pseudo_parabola}) or green dash-dotted line on $\D_0$.
While increasing $K$, stable solutions appear on the level curves of $\A_{j_1}$ and $\A_{j_0}$ (at point $Y$ in Fig.~\ref{pseudo_parabola}), 
thus with one angle difference larger than $\pi/2$ (or less than $-\pi/2$). 
This angle difference then enters $[-\pi/2,\pi/2]$ while $K$ increases. This happens 
at point $X$ in Fig.~\ref{pseudo_parabola}. 
Then the stable solution persists for any larger $K$ along the corresponding level curve of $\A_0$.
\begin{figure}[b]
 \begin{center}
  \includegraphics[width=16cm]{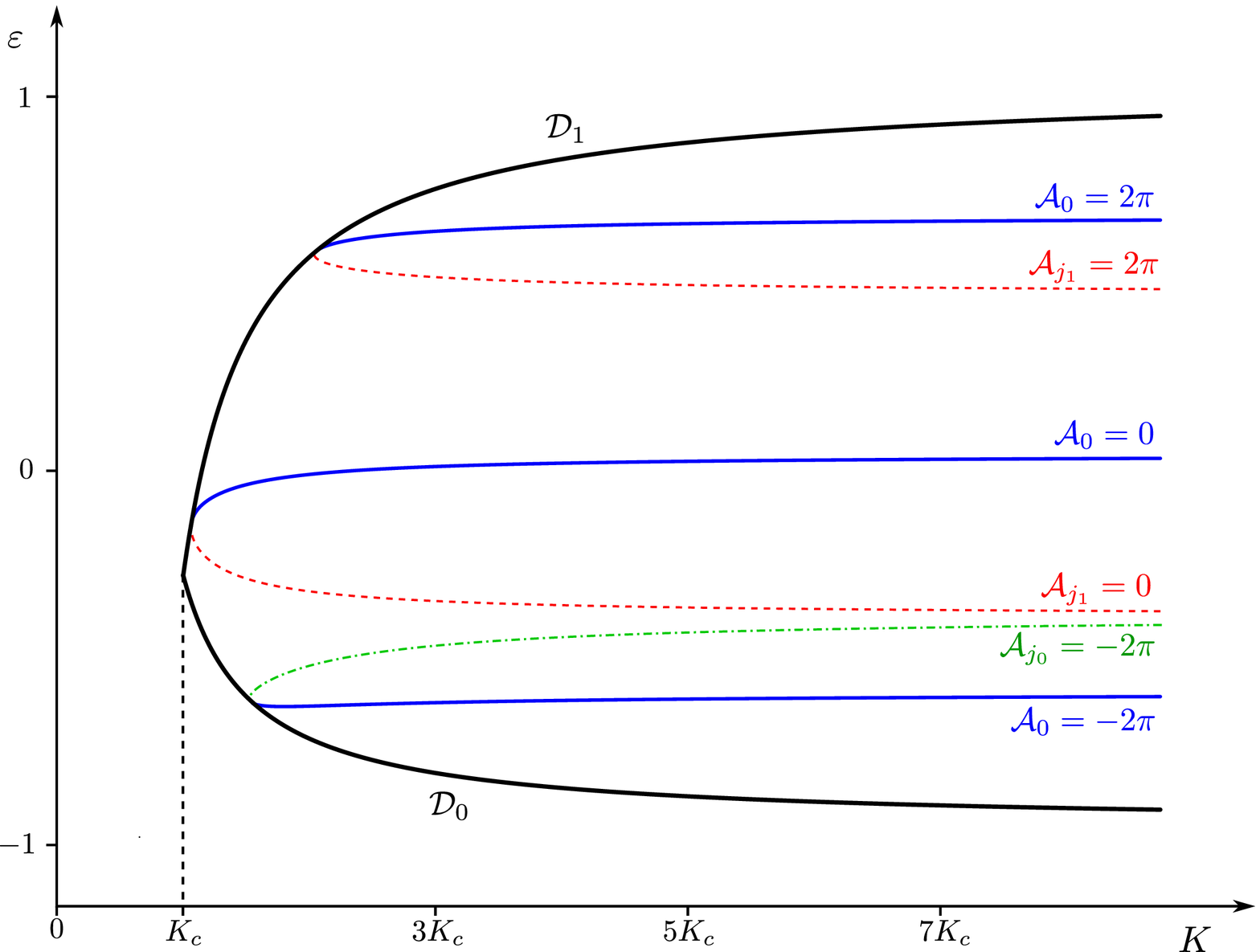}
  \caption{\textit{Level curves $\A_0=2\pi q$ (blue), $\A_{j_1}=2\pi q$ (dashed red) and $\A_{j_0}=2\pi q$ (dash-dotted green), for different $q$-values, in the $(K,\eps)$-plane.
  The level curves of $\A_0$ and $\A_{j_1}$ meet on the upper boundary $\D_1$ of 
  the domain $\D$ defined in Eq.~(\protect\ref{eq:D}), 
  and the level curves of $\A_0$ and $\A_{j_0}$ meet on the lower boundary $\D_0$. 
  The two boundaries $\D_0$ and $\D_1$ meet at $K_c=(P_{\max}^*-P_{\min}^*)/2$. A zoom-in 
  of the region where the level curves $\A_0=2\pi$ and $\A_{j_1}=2\pi$ meet is depicted in 
  Fig.~\protect\ref{pseudo_parabola}}}
  \label{globalPicture}
 \end{center}
\end{figure}

\section{Conclusion}\label{conclusion}
We have investigated the multiplicity of stable stationary solutions to the Kuramoto model. 
For any network, Theorem~\ref{thm_loop_flow} shows that any two different solutions of 
Eqs.~\eqref{eq:Ksteady} and \eqref{PFsimple} differ by a 
combination of circulating flows around the cycles of the network.
We showed that these loop flows are quantized and labelled by a topological winding number. 
In the particular case of single-cycle networks, we then derived an upper bound on the number of stable solutions of the power flow Eq.~(\ref{PFsimple}), 
\begin{align*}
 \mathcal{N}&\leq2\, {\rm Int}\left[{n}/{4}\right]+1\, ,
\end{align*}
which is algebraic in $n$, the length of the cycle. It significantly improves the exponential bounds obtained if Refs.~\onlinecite{Ngu14,Meh14}. 
Our result generalizes the bounds obtained by Ochab and G\'{o}ra~\cite{Och10}, dealing in particular with angle differences 
larger than $\pi/2$, and extends the results of Rogge and Aeyles~\cite{Rog04} to bidirectional couplings.

\paragraph*{}
As parallel results, we obtained some sharp conditions for the solutions on a cycle network 
with some angle differences in $\big(-\pi,-{\pi}/{2}\big)\cup\big({\pi}/{2},\pi\big]$ to be stable. 
We showed that at most one angle difference can be larger than $\pi/2$ in a stable solution and 
it can only be the case on the most loaded line. 
Moreover, any stable solution with an angle difference larger than $\pi/2$ can be directly connected 
to a solution with all angle differences 
in $[-\pi/2,\pi/2]$ for the same network at larger $K$.

\paragraph*{}
The quantized loop flows discussed above are highly undesirable in electric power grids.
They transmit power which is never distributed but only generates ohmic losses.
A deeper understanding of loop flows, how they appear and how to make them 
disappear could greatly help in devising power grids protected against their emergence. In all
likelihood, this would be of great interest for power grid operators. 

\paragraph*{}
Another line of possible future research would be to compare the stability of different solutions. 
This could be done in at least two ways, first,
comparing the spectra of the stability matrices for different solutions, second, comparing
the volumes of the respective basins of attraction. This second approach was proposed in 
Ref.~\onlinecite{Wil06}. Of particular interest would be to relate these two measures of stability with 
winding numbers.

\paragraph*{}
Obviously, the next step in this investigation is to study how our results can be extended to more general networks with multiple cycles. 
It is clear that in the case of independent cycles as in Fig.~\ref{indepCycles}, the number of stable solutions is bounded by
\begin{align*}
 \mathcal{N}=\left(2\, {\rm Int}\left[{n_1}/4\right]+1\right)\left(2\, {\rm Int}\left[{n_2}/4\right]+1\right)\, ,
\end{align*}
where $n_1$ and $n_2$ are the number of edges in the two cycles respectively. 
In this case, a loop flow on one of the cycles does not influence the loop flow on the other cycle.
The problem becomes more intricate when we have to deal with cycles sharing edges, where the loop flows add, Fig.~\ref{depCycles}. 
Here the flow on one of the cycles limits the loop flow on the other cycle because it could saturate the capacity of the common lines. Work along those lines is in progress.
\begin{figure}[b]
 \centering
 \subfloat[]{
  \includegraphics[width=175px]{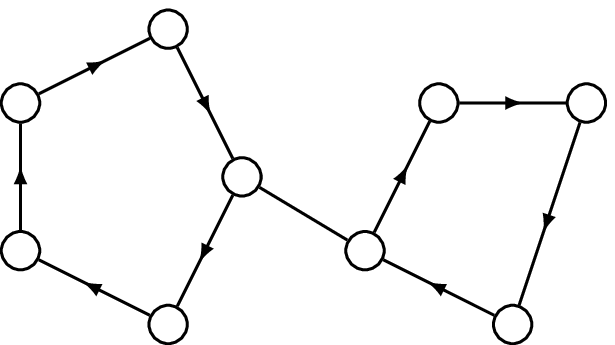}
  \label{indepCycles}
  }
  \hspace{2cm}
  \subfloat[]{
   \includegraphics[width=150px]{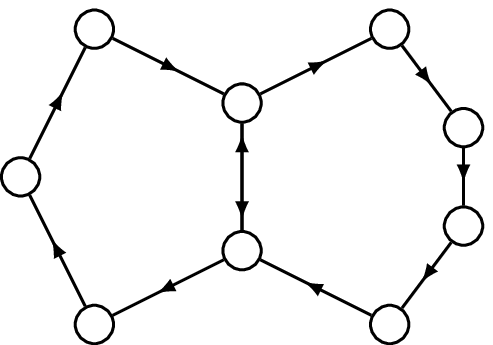}
  \label{depCycles}
 }
 \caption{\textit{Two graphs with two cycles. Left: the cycles are independent but connected, the loop flow on one of them does not influence the loop flow on the other one. 
 Right: the cycles are not independent. Thus having a loop flow on one of them restricts the possible flows on the other one.}}
\end{figure}

\section{Acknowledgments}
This work has been supported by the Swiss National Science Foundation under an AP Energy Grant.
The authors would like thank L. Pagnier for useful discussions.

\begin{appendices}
\numberwithin{equation}{section}

\section{{$LU$}-decomposition of the stability matrix}\label{induction}

We prove inductively that the diagonal elements $\beta_i$ of the upper triangular matrix of the $LU$-factorization of 
the stability matrix, obtained through Thomas algorithm~\cite{Gan14}, satisfy the relation
\begin{align*}
 \mu_{i-1}\cdot\beta_i&=-\mu_i\, ,
\end{align*}
where $\mu_i$'s are defined in Eq.~(\ref{eq:Definition mu}).
For $i=1$, we have
\begin{align*}
 \mu_0\cdot\beta_1&=1\cdot (-c_1-c_2)\\
 &=-\mu_1\, .
\end{align*}
Suppose now that $\mu_{i-1}\cdot\beta_i=-\mu_i$. Let us show that $\mu_i\cdot\beta_{i+1}=-\mu_{i+1}$.
\begin{align*}
 \mu_i\cdot\beta_{i+1}&=\mu_i\left(-c_{i+1}^\iPower-c_{i+2}^\iPower-\frac{c_{i+1}^2}{\beta_i}\right)\\
 &=-c_{i+1}^\iPower\mu_i-c_{i+2}^\iPower\mu_i+c_{i+1}^2\mu_{i-1}\\
 &=-c_{i+1}^\iPower\sumcos{i+1}-c_{i+2}^\iPower\sumcos{i+1}+c_{i+1}^2\sumcos{i}\\
 &=-c_{i+1}^\iPower\left(c_{i+1}^\iPower\sumcos{i}+\prod_{k=1}^ic_k^\iPower\right)-c_{i+2}^\iPower\sumcos{i+1}+c_{i+1}^2\sumcos{i}\\
 &=-c_{i+1}^\iPower\prod_{k=1}^{i}c_k^\iPower-c_{i+2}^\iPower\sumcos{i+1}\\
 &=-\sumcos{i+2}\\
 &=-\mu_{i+1}\, .
\end{align*}

\end{appendices}

\bibliographystyle{apsrev4-1}
\bibliography{biblio_mlfs}

\end{document}